%% file: main.tex
\documentclass[a4paper,12pt]{article}
\usepackage{biblatex}
\usepackage{graphicx} 
\usepackage[margin=1in]{geometry}
\usepackage{booktabs}
\usepackage{mathtools}
\usepackage{amsmath}
\usepackage{amsthm}
\usepackage{amssymb}
\usepackage{amsfonts}
\usepackage{bbm}
\usepackage{graphicx}
\usepackage{titling}
\usepackage{multicol}
\usepackage[font=small]{caption}
\usepackage[capposition=top]{floatrow}
\usepackage{enumitem}
\newcommand{\qcite}[1]{\citeauthor{#1} (\citeyear{#1})}
\newcommand{\qcitenp}[1]{\citeauthor{#1} \citeyear{#1}}

\usepackage{tikz}
\usetikzlibrary{arrows}
\usetikzlibrary{positioning}
\usepackage{pgfplots}

\newcommand{\succsimq}{\mathrel{\raisebox{-0.05em}?\kern-0.3em\succsim}}
\newcommand{\eqtrir}{%
  \mathrel{%
    \ooalign{%
      $\vartriangleright$\cr
      \hidewidth\rule[ -0.5ex ]{1.8ex}{0.12ex}\hidewidth\cr
    }%
  }%
}
\newcommand{\trir}{\vartriangleright}

\newcommand{\prightarrow}{\xrightarrow{\hspace{5pt}p\hspace{5pt}}}
\newcommand{\ninfrightarrow}{\xrightarrow{n \rightarrow \infty}}

\newtheorem{theorem}{Theorem}

\theoremstyle{definition}
\newtheorem{definition}{Definition}
\newtheorem{lemma}{Lemma}
\newtheorem{corollary}{Corollary}
\newtheorem{assumption}{Assumption}

\newtheorem{example}{Example}

\title{How to Use Prices for Efficient Online Matching}
\author{Terence Highsmith II$^1$}
\date{Version: \today}

\begin{document}

\maketitle
\begin{center}
    \begin{minipage}{0.9\textwidth}
\textit{Many matching markets feature unknown, dynamic arrivals of agents that must match immediately. A caseworker must match an abused child to a foster home, a hospital must assign a patient in critical condition to a room, or a city must place a homeless individual into a shelter. We design an online matching algorithm---the Sequential Equilibrium Mechanism (SEM)---that approximates large market equilibria to match arriving agents to objects. SEM is asymptotically efficient, fair, and strategy-proof with probability one. Our application plans to deploy a lab-in-the-field experiment where real caseworkers match vulnerable children to host homes, and we provide simulation evidence that SEM can substantially improve welfare.} (JEL C78, D47, D71)	
\end{minipage}
\end{center}

\vspace{1em}

\input{Paper/Introduction}

\input{Paper/Preliminaries}

\input{Paper/Results}

\printbibliography

\appendix
\counterwithin{table}{section}
\counterwithin{figure}{section}
\counterwithin{theorem}{section}
\counterwithin{proposition}{section}
\counterwithin{lemma}{section}

\input{Paper/Appendix}

\input{Paper/Appendix-B}

\end{document}

%% file: Paper/Introduction.tex
\section*{1 \hspace{5pt} Introduction}

Many matching markets feature stochastic, dynamic ("online") arrivals of agents or objects that a market designer (e.g. firm or government) must allocate. A caseworker must match an abused child to a foster home, a hospital must assign a patient in critical condition to a room, or a city must place a homeless individual into a shelter. Unlike traditional online matching problems where the organization can discard arrivals, the settings referenced above feature arrivals that \textit{must} be matched immediately upon arrival whenever possible. This constraint is known as \textit{greedy allocation}. Greedy allocation problems may appear to be trivial: if an agent arrives, she must be matched, and allocating her most preferred object to her is constrained-efficient within the set of greedy allocations. In reality, agents often have indifferences in their preferences: two foster homes might be equally suitable for a child, two heterogeneous hospital rooms might be equally adequate for patient care, or two shelters might provide similar services to a homeless individual. Moreover, these preferences are often ordinal in lieu of a clear numeric objective. In these cases, randomly allocating one of the agent's most preferred objects to her can unnecessarily jeopardize the welfare of future arrivals.

Our online matching problem faces two core constraints: (1) greedy allocation and (2) coarse, \textit{ordinal} preferences. Agents reveal their preferences upon arrival, and the goal is to find a matching that is greedy, fair (subject to the constraints), and ex-post efficient. Surprisingly, there is no online matching mechanism that satisfies any version of Pareto efficiency for ordinal preferences. We design a novel, randomized online matching mechanism that satisfies these constraints: the \textit{Sequential Equilibrium Mechanism} (SEM). Our design is in two stages: (i) we solve a large market competitive equilibrium using a new equilibrium concept. We allot fake, token money to agents and solve for a competitive equilibrium that uses random prices to smooth agent demand. In stage (ii), SEM opens \textit{spot markets} in each period. The spot markets solve a large market equilibrium for the realized arrivals and expected future demand, and we allocate objects according to the equilibrium allocation.

SEM is a randomized mechanism that allocates lotteries to arrivals and immediately realizes the lottery at the end of each time period. We refer to the realized lotteries as ex-post lottieres and the realized allocations as ex-post allocations. Under careful use of token money, SEM is greedy. It ensures that arriving agents are allocated lotteries containing their most preferred objects whenever possible. Equivalently, every ex-post allocation within a time period is efficient with respect to the current arrivals and object supply. Moreover, SEM's lotteries are equal-type envy-free. Agents arriving at the same time period do not envy each other's lotteries. SEM is \textit{strategyproof with probability one}: an agent's strategic incentives to manipulate vanish as the number of arrivals per period grows large. Finally, our main result is that SEM is \textit{asymptotically efficient}. SEM's ex-post lottery is ordinally efficient with probability one as the number of arrivals per period grows. In contrast, it is impossible to design a greedy mechanism that achieves ex-post ordinal efficiency when the market is finite. 


Our results for SEM are very encouraging, but they are not definitive. That is, although SEM is asymptotically efficient, its finite sample performance is unknown. To test SEM's empirical effects, we apply our theoretical results in the context of a U.S.-based nonprofit (henceforth, the Firm) that places children into temporary homes. The Firm is a national organization that contracts with state child welfare departments to find temporary, emergency homes for children that are at risk of abuse or neglect. In addition, the Firm offers pro-bono services to parents that voluntarily request emergency hosting for their children. In the Firm's problem, children stochastically arrive over time and must be matched to a host home. Their current matching process is rudimentary: upon an child's arrival, every host home receives a text message about the child, and the first home to respond can host the child. The Firm's objective is to efficiently allocate children to homes, but this process was unable to accomplish their goal because even homes that are perfectly aligned with the Firm's objective can fail to coordinate on efficient allocations.

We present simulation results that compare Serial Dictatorship with Random Tie-Breaking (SD-RTB) to SEM. Upon arrival of children, SD-RTB selects a random priority order over the arrivals. In order, SD-RTB allocates to a child a random home among her most preferred homes. Since we have not yet obtained data on observed matches from the Firm, we use SD-RTB as a simplification to represent their existing matching process and simulate a simple market to compare the mechanisms. Thus, our simulations only compare two theoretical mechanisms. They do not provide counterfactuals relative to the ground truth. Our ongoing work with the Firm hopes to implement an experimental approach to assess the empirical effects of SEM. Our simulations analyzes the number of matches while varying the number of arrivals to assess SEM's rate of convergence to efficiency.

The simulations show that SEM delivers approximately 10\% more matches than SD-RTB, a finding that is steady regardless of the market size. In line with our theoretical predicts, both mechanisms' performance experience less variance as the arrival rate increases, demonstrating that aggregate uncertainty smooths out in large markets. Overall, our simulations suggest that SEM can substantially improve welfare in practice.

The success of SEM relies on a careful solution to the large market problem. Pseudomarkets, originating with \qcite{hylland-1979}, constitute the foundational approach to allocate indivisible objects to agents using competitive equilibria. In our case, agent preferences are ordinal with indifferences, making it technically challenging to prove equilibrium existence. An agent's demand is a correspondence when agents have indifferences. Particularly, it is an ill-behaved correspondence that is fully discontinuous. However, restricting to prices that are not exactly equal to the budget constraint, the demand correspondence is upper hemicontinuous, suggesting that "smoothing" over such prices could create upper hemicontinuous demand and allow the market designer to find prices that clear the market: $\mathcal{D} = \mathcal{S}$.

Briefly, our technique is to solve a large-market allocation problem with \textit{random prices}. Our equilibrium concept then focuses on setting \textit{expected demand} equal to supply. Expected demand is the Aumann integral: integration over selections from the demand correspondence. Expected demand is convex-valued and upper hemicontinuous. We leverage this to apply Kakutani's Fixed Point Theorem ala traditional existence arguments. Combining this with asymmetric budgets--larger for earlier arrivals---guarantees that earlier agents have greater purchasing power and always receive their most preferred objects when possible, satisfying the greedy allocation constraint. Simultaneously, the expected demand of future arrivals imputes the value of objects into the price vector and allows it to guide current allocations efficiently. 

Random prices also allow us to implement \textit{random tie-breaking}. We show how a naive approach to do so fails, but correctly specifying the price error term allows the market designer to find greedy, envy-free, and ordinally efficient allocations under a condition that can be verified ex-post in order to simplify computational complexity.

\subsection*{Literature Contribution}

Our work contributes to three separate, although sometimes interlocking, bodies of literature: pseudomarkets, online matching, and large market analysis.

\textbf{Pseudomarkets.} \qcite{hylland-1979} first proposed utilizing competitive markets to allocate indivisible objects (jobs) to agents (employees). In this \textit{pseudomarket} approach, agents receive token money to purchase probability shares of receiving an object; the market designer searches for a price vector that clears the market without transfers. Unlike the arguably most common class of allocation mechanisms, Serial Dictatorships, pseudomarkets are ex-ante efficient and ex-ante fair (envy-free) when the designer allots equal token money to all agents. \qcite{miralles2021foundations} proves that the First and Second Welfare Theorems from general equilibrium theory extend to pseudomarkets; \qcite{pycia-2023} offers a comprehensive overview of the standard model.

Recently, many economists have followed up on \qcite{hylland-1979} to solve more complex allocation problems. \qcite{budish2011combinatorial} uses competitive equilibrium that approximately clear the market to efficiently allocate objects---course seats, in application---to agents with combinatorial demand. Further, unlike \citeauthor{hylland-1979}'s original proposal for agents to purchase probability shares, allocations are deterministic in \citeauthor{budish2011combinatorial}'s framework. \citeauthor{budish2011combinatorial}'s approximate competitive equilibrium from equal incomes (ACEEI) mechanism is approximately efficient, envy-bounded by one object, and strategyproof in the large. \qcite{kornbluth2021undergraduate} extend \qcite{budish2011combinatorial} for the case when courses have priorities over students, as is common in undergraduate course allocation. Their mechanism satisfies the same properties, modulo priorities, and is also approximately stable. 

\qcite{he2018pseudo} focuses on the \textit{school choice} setting where agents have unit demand and schools ("objects") have priorities over students. They construct a random mechanism that is ex-ante stable, fair, constrained efficient, and ex-ante two-sided efficient. \qcite{echenique2021}, \qcite{nguyen2021}, and \qcite{gul2024efficient} develop frameworks for more complex settings such as allocation problems with floor or ceiling constraints.

\qcite{nguyen2025efficiency} design a pseudomarket mechanism for combinatorial allocation with ordinal preferences that characterizes the set of ordinally efficient allocations. Their equilibrium concept, like ours, uses random perturbations of demand to guarantee existence. Their setting assumes that agents have strict preferences, whereas one of our key contributions is to extend their approach to allow for indifferences. Non-trivial technical problems arise that complicate applying fixed point theorems to show equilibrium existence. As a result, our proofs differ from \qcite{nguyen2025efficiency}. In addition: while they introduce random budgets, we introduce random prices. While we conjecture that these two approaches are equivalent in the strict preference domain, they are not on a general domain. Different assumptions on the price error term correspond to different modes of tie-breaking (or lack thereof). Comparatively, random budgets are less flexible. Further, we show a link between the validity of tie-breaking and asymptotic efficiency. Last, they establish results for combinatorial allocation; our work applies our equilibrium concept to design a mechanism that satisfies desirable properties for online matching.

Related, \qcite{bogomolnaia2001probabilistic} introduce Probabilistic Serial (PS) in assignment problems with unit demand, motivating stochastic dominance as a welfare criterion. Although, their solution is not a pseudomarket mechanism. We highlight \qcite{bogomolnaia2001probabilistic} because, along with follow-ups that extend PS (EPS) to environments with indifference, EPS solves our static problem (\qcitenp{katta2006solution}). However, it is unclear if it is possible to reconfigure EPS for online matching problems; this motivates our use of pseudomarkets and is our contribution beyond the static problem. Moreover, we are able to show a condition wherein a form of random tie-breaking is without loss of efficiency.

\textbf{Online Matching.} \qcite{karp1990optimal} introduces the canonical model for online matching: agents or objects arrive sequentially over time, matching decisions are irrevocable, and arrivals depart immediately if unmatched. In the original problem, the goal is to maximize the number of matches; generally, the goal is to maximize a sum of cardinal match valuations. The ratio between algorithm's worst-case ("adversarial") performance and optimal performance is known as the competitive ratio. The celebrated Ranking algorithm achieves a competitive ratio of $1 - 1/e$. Our present work assumes ordinal preferences and develops a mechanism with optimal asymptotic performance rather than an adversarial guarantee.

\qcite{feldman2014online} study online matching under known arrival distributions and show that this structure improves achievable welfare relative to the adversarial benchmark. Essentially, their solution---and the typical approach to online stochastic matching---can be thought of as solving a utilitarian maximization problem for a large market problem (referred to as a "fluid problem", see \qcite{aouad2022nonparametric} for further discussion). The algorithms use the fluid solution to determine matchings upon agents' arrivals. Competitive equilibria need not be utilitarian efficient except under special preference assumptions, implying these approaches are fundamentally distinct from ours. \qcite{alaei2012online} establishes an algorithm with the intepretation that it utilizes posted prices to allocate arrivals, achieving a perfomance bound of $1 - 1/\sqrt{k+3}$, where $k$ is the minimum number of objects an arrival demands.

\qcite{gao2021} use equilibria to allocate objects in the \textit{online fair division} problem where there is a static population of agents that must be allocated arriving objects, and the market designer would like to achieve efficient and envy-free allocations. Many impossibility results from online matching fade because agents are typically assumed to be non-satiated, ruling out budget or demand constraints. However, the algorithms in this strand use linear programs that explicitly represent competitive equilibria in fluid problems (\qcitenp{benade2024fair}). While the results in online matching and fair division exploit stochastic arrivals, they continue to model preferences as cardinal weights or binary edges and allow the planner to reject arrivals or otherwise feature non-satiated demand. In addition to modeling online matching with ordinal preferences and greedy constraints, we also prove that our mechanism satisfies fairness and strategyproofness properties.

\textbf{Large Markets.} \qcite{azevedo2016supply} presented one of the first comprehensive arguments for the validity of large markets as approximations of finite markets. They show that, in matching with two-sided preferences (e.g. students and schools in school choice), the set of competitive equilibria correspond to the set of stable matchings, and the latter contains a unique matching under strict preferences and assumptions on aggregate demand. Others (\qcite{kojima2009incentives}, \qcite{lee2016incentive}, \qcite{azevedo2019strategy}) focus on strategic incentives in large markets as opposed to our primary criteria of asymptotic efficiency. They prove that, in many settings, large markets deliver (approximate) strategyproofness. Under the assumption of strict preferences, \qcite{liu2016ordinal} proves that all asymptotically efficient, asymptotically strategyproof, and symmetric mechanisms produce equivalent allocations.

Relative to the existing literatures, our approach differs in three key respects. First, we consider a \textit{dynamic} process wherein a greedy allocation constraint requires arrivals to be matched immediately to the best possible object whenever feasible, reflecting institutional features of many applications. Second, we evaluate welfare using stochastic dominance rather than cardinal objectives or worst-case competitive ratios. Third, we integrate pseudomarket techniques into an online setting, yielding a mechanism that is greedy, equal-type envy-free, asymptotically efficient, and strategyproof with arbitrarily high probability. To our knowledge, this is the first online matching mechanism to jointly satisfy these properties under ordinality.

We proceed as follows. In Section 2, we detail our model preliminaries, the large market problem, and the online matching problem. In Section 3, we solve the large market problem. In Section 4, we adapt the large market solution to design an asymptotically optimal mechanism for online matching. In Section 5, we discuss extensions, most prominently to fully online matching where objects also arrive stochastically. In Section 6, we simulate the performance of our mechanism in an applied setting. Finally, we conclude in Section 7.

%% file: Paper/Preliminaries.tex
\section*{2 \hspace{5pt} Preliminaries}
A market is $\mathcal{M} = (X, S, I, T, F)$.
\begin{enumerate}
    \item \textbf{Objects $(X)$:} There is a finite set of objects $X = \{1, 2, ..., |X|\}$ with a typical $x \in X$. The supply of goods is $S = (s_x)_{x \in X}$ where $s_x \in \mathbb{N}_{+}$. There is a null option $o \in X$ that represents receiving no object such that $s_o > |I|$.
    \item \textbf{Agents $(I)$:} There is a finite set of agent \textit{types} $I = \{1, 2, ..., |I|\}$ with a generic $i \in I$. (Mostly, we will write 'agent', but the correct interpretation is an agent type.) Every agent type is associated with ordinal preferences $\succeq_i \hspace{3pt} \in \mathcal{L}(X)$ where $\mathcal{L}(X)$ is the set of all complete, transitive preference relations over $X$. The strict portion of $\succeq_i$ is $\succ_i$ such that $x \succ_i y$ if and only if $x \succeq_i y$ and not $y \succeq_i x$. Each agent type has an arrival time $t_i$, i.e., an agent type is described by preferences and arrival time. The sets $X$ and $I$ are fixed and known.
    \item \textbf{Arrivals $(T, F)$:} The time horizon is a finite set $[T] = \{1, 2, ..., T\}$ with a typical time $t \in [T]$. The market unfolds over this finite horizon, and the arrival process is described by $\mathbf{f} = (f^t(i))_{t \in [T]}$ where $f^t(i)$ is the probability that type $i$ arrives at time $t$. We will write $f^{t,k} = \sum_{n = t}^k f^n(i)$ and $f(i) = \sum_{t \in [T]} f^t(i)$. $\mathbf{f}$ satisfies: if $f^t(i) > 0$, then $t_i = t$. The cumulative distribution functions are $F = (F^t(i))_{t \in [T]}$ which describe the probability that agents up to $i$ arrive at time $t$; we normalize to one arrival per period so that $F^t(|I|) = 1$ for all $t \in [T]$.
\end{enumerate}

We can populate $X$ and $I$ with an arbitrary number of elements and simply consider them to be fixed. In addition, we will hold $T$ fixed. We refer to the market \textit{fundamentals} as $\nu = (F, S)$\footnote{$F$ will be a sufficient statistic for aggregate demand. The fundamentals represent the classical primitives that determine competitive equilibria: demand and supply.}. 

In addition, we define the following notions:

\begin{enumerate}[start=4]
    \item \textbf{Allocations:} An agent's allocation is $a_i \in [0,1]^{|X|}$ satisfying $\sum_{x \in X} a_{i,x} = 1$. Note that such allocations are then probability distributions over objects. A market allocation is a function specifying an allocation for each agent type: $\mathbf{a} : I \rightarrow [0,1]^{|X|}$. $\mathbf{a}$ is feasible if $\sum_{i \in I} \mathbf{a}(i) f(i) \leq S$. 
    \item \textbf{Stochastic Dominance:} An allocation $a'_i$ stochastically dominates $a_i$ if, for each $x \in X$, $\sum_{y \succeq x} a'^{,y}_i \geq \sum_{y \succeq x} a_i^y$. We write that $a'_i \succeq_i a_i$ to represent the (partial) extension of $i$'s preferences over lotteries defined by stochastic dominance. If this is strict for some $x$, then $a'_i \succ_i a_i$, i.e., $a'_i$ strictly stochastically dominates $a_i$.
\end{enumerate}

An allocation $\mathbf{a}$ is \textit{ordinally efficient} if there does not exist any feasible allocation $\mathbf{a}'$ such that, for all $i \in I$, $\mathbf{a}'(i) \succeq_i \mathbf{a}(i)$, and for some $i' \in I$, $\mathbf{a}'(i') \succ_i \mathbf{a}(i')$.

The greedy allocation constraint we consider requires giving an arrival one of her favorite objects that she could possibly receive upon arrival. An allocation $\mathbf{a}$ is \textit{greedy} if, for any $j \in I$, $\mathbf{a}(j)_x > 0$ implies that, for any $x^* \in X$ such that $x^* \succ_j x$, $\sum_{i \in I} \mathbf{a}(i)_{x^*} f^{1,t_j}(i) = s_{x^*}$.

Equivalently, if an agent does not receive a lottery over her most preferred objects, then any object $x^*$ more preferred than an object $x$ she receives with positive probability must have its remaining supply consumed by (weakly) earlier arrivals.

\subsection*{2.1 \hspace{5pt} The Online Matching Problem}

In the online matching problem, one agent $i$ arrives at period $t$ according to $f^t$. We denote the arrival at time $t$ as $\tilde{I}^t$, then, for $k \geq n$, the multi-set of arrivals between $n$ and $k$ is $\tilde{I}^{n,k} = \cup_{n \leq t \leq k} \tilde{I}^t$ and $\tilde{I} = \tilde{I}^{1,T}$. The set of realized allocations at time $t$ is:
\[\tilde{A}^t = \{\tilde{a} : \tilde{a}_i \in [0,1]^{|X|} \text{ for all } i, \tilde{a} \text{ is feasible and } \tilde{a}_{i,x} > 0 \text{ for any } x \in X \implies i \in \tilde{I}^{1,t}\}\]
An \textit{ex-post allocation} is some $\tilde{a} \in \tilde{A}^T$. Note that this can be a lottery. The critical distinction is that an ex-post allocation can only allocate objects to realized agents. Our asymptotic analysis shows that we can implement any lottery using deterministic allocations so that allowing lotteries does not impact our main results. $\tilde{a}$ is ordinally efficient if there does not exist another ex-post allocation $\tilde{a}'$ such that $\tilde{a}'_i \succeq_i \tilde{a}_i$ for all $i \in \tilde{I}$ and $\tilde{a}'_{i'} \succ_{i'} \tilde{a}_{i'}$ for some $i' \in \tilde{I}$. 

An \textit{online matching mechanism} assigns a lottery to realized arrivals given preference reports; it is a function, for each $t \in [T]$, $\pi^t : \mathcal{M} \times \tilde{I}^{1,t} \rightarrow \tilde{A}^t$, and we write $\pi = (\pi^t)_{t \in [T]}$. We allow online matching mechanisms to depend on the market, which, in online allocation, is ex-ante knowledge about objects, preferences, budgets, and the distribution of arrivals. We will usually write $\pi^t_\mathcal{M}(\tilde{I}^{1,t})$ without dependence on $\mathcal{M}$ and $\tilde{I}^{1,t}$, considering them to be implicit. We assume that object allocations are permanent: $\pi^{t+1}_{i,x} \geq \pi^t_{i,x}$ for all $i \in \tilde{I}^{1,t}$ and $x \in X$. $\pi$ is ordinally efficient if there does not exist any ex-post allocation $\tilde{a}$ such that $\tilde{a}_i \succeq_i \pi^T_i$ for all $i \in \tilde{I}$ and $\tilde{a}_{j} \succ_{j} \pi^T_j$ for some $j \in \tilde{I}$. 

An online matching mechanism $\pi$ is greedy if it assigns a most preferred lottery upon an agent's arrival whenever feasible: $\pi^{t_i}_{i,x} > 0$ implies that, for any $x^* \in X$ such that $x^* \succ_i x$, $\sum_{j \in \tilde{I}^{1,t_i}} \pi^{t_i}_{j,x^*} = s_{x^*}$. Last, we say that an online matching mechanism $\pi$ is \textit{strategyproof} if no agent with preferences $\succeq_i$ can receive a strictly better (stochastically dominating) allocation by misreporting her preferences; in this setting, it is equivalent to misrepresenting her identity holding her arrival time fixed: for any $i \in \tilde{I}$, there does not exist any $j \in I$ with $t_j = t_i$ such that $\pi^T_{i}(\{j\} \cup \tilde{I} \setminus \{i\}) \succeq_i \pi^T_{i}(\tilde{I})$\footnote{This does not restrict misrepresentations as we can insert mass zero sets of agents with arbitrary preferences to allow $i$ to report any preferences.}.

\subsection*{2.2 \hspace{5pt} A Basic Impossibility Result}

Generally, it is impossible to design an online matching mechanism that is simultaneously greedy and ordinally efficient. The following example demonstrates this.

\begin{example}\label{example:impossibility}
     There are two objects, $x$ and $y$, each with one unit of supply, and three agent types, $a$, $b$, and $c$. Type $a$ is indifferent between $x$ and $y$, type $b$ prefers $x$ to $y$, and type $c$ prefers $y$ to $x$. One agent arrives per period for two periods. At $t = 1$, agent $a$ arrives with probability one. At period $t = 2$, either agent $b$ or $c$ arrives with probability one-half each. Any greedy allocation must allocate a convex combination of $x$ and $y$ with probability one to agent $a$ at $t = 1$. Therefore, the allocation is ordinally inefficient with positive probability since the arrival at $t = 2$ might not receive her most preferred object with probability one while $a$ is indifferent.
\end{example}

This impossibility motivates our search for an online matching mechanism that will still be desirable given the constraints discussed so far.

%% file: Paper/Results.tex
\section*{3 \hspace{5pt} The Large Market Problem}

Our goal is to design an online matching mechanism---a direct revelation mechanism where agents report preferences upon arrival and receive allocations---that satisfies efficiency desideratum while respecting the greedy allocation constraint. Our approach to accomplish this is to develop a large market equilibrium concept that can be computed at each time period and used to guide allocations. In this section, we define our equilibrium concept, prove existence, and analyze its efficiency and fairness properties.


\subsection*{3.1 \hspace{5pt} Price Equilibria}

We describe our equilibrium concept. A \textit{random price} is a random vector $\mathbf{p} = (\mathbf{p}_x)_{x \in X}$ where $\mathbf{p}_x = p_x + \xi_x$ for some $p_x \in \mathbb{R}_{+}$ and $\xi_x \in \mathbb{R}$. We write $\xi = (\xi_x)_{x \in X}$. $\xi$ is random vector distributed $G(\cdot)$. A budget function $b : I \rightarrow \mathbb{R}_{++}$ (writing $b(i) = b_i$ for short) specifies a strictly positive amount of token money given to each $i \in I$. For a realization $\mathbf{p}$, $i$'s \textit{demand} is her set of cheapest, most preferred objects that are affordable:
\[D_i(\mathbf{p}) = \min_{\mathbf{p} \cdot a_i} \bigg\{ \max_{\succeq_i} \Big\{ a_i \in X : \mathbf{p} \cdot a_i \leq b_i \Big\} \bigg\} \]
We will need to integrate over $\mathbf{p}$ to obtain expected demand. The suitable tool for integration over correspondences is the Aumann integral, defined for any measurable subset $S$:
\[\int_{S} D_i(\mathbf{p}) dG(\xi) = \bigg\{ \int_{S} g_i(\mathbf{p}) dG(\xi) : g_i(\mathbf{p}) \in D_i(\mathbf{p}) \text{ for almost all } \mathbf{p} \bigg\}\]
Aumann integrals are multi-valued functions (correspondences). Henceforth, it is assumed that every integral is Aumann integration, and we suppress dependence on $\xi$ where it is clear. 

$i$'s \textit{lottery demand} is $L_i(p;\xi) = \int_{R_+} D_i(p + \xi) dG(\xi)$. Lottery demand is the expectation over optimal objects consumed at each realization of the correlated prices $\mathbf{p}$. Aggregate demand for times between $n$ and $k \geq n$ is $\mathcal{D}^{n,k}(p) = \sum_{i \in I} L_i(p) f^{n,k}(i)$. Aggregate demand for all times $k \geq t$ is defined as $\mathcal{D}^{t}(p) := \mathcal{D}^{t,T}(p)$. 

An equilibrium will require the market to clear in the sense that aggregate demand does not exceed supply. Our equilibrium will also be specified to consider aggregate demand from certain time periods and onward to allow for implementation in online matching, which we describe in Section 4.

\begin{definition}
    A price $p^*$ is a $(t)$-\textit{price equilibrium} (or simply $(t)$-equilibrium) if:
    \begin{enumerate}
        \item $\mathcal{D}^t(p^*) \leq S$
        \item $p^*_x > 0$ implies $\mathcal{D}^t_x(p^*) = s_x$ for all $x \in X$
    \end{enumerate}
\end{definition}
We say that a $(1)$-equilibrium is simply an equilibrium. We refer to an allocation $\mathbf{a}^*$ as a \textit{$(t)$-equilibrium allocation} if there exists a $(t)$-equilibrium $p^*$ such that, for $i$-ae, $\mathbf{a}^*(i) \in L_i(p^*)$. Similarly, an equilibrium allocation is a $(1)$-equilibrium allocation.

\subsection*{3.2 \hspace{5pt} Existence}

In this sub-section, we establish existence results for our equilibrium concept. In general, equilibria often fail to exist for indivisible object allocation without transfers if one desires exact market clearing and deterministic allocations. The simplest illustration of this is seen in Figure \ref{fig:non-existence}. Aggregate demand is discontinuous at the budget threshold, and any price vector that attempts to clear the market for $x$ either over- or under-allocates it.

\begin{figure}
    \centering
        \begin{tikzpicture}
            \begin{axis}[
                width=10cm,
                height=6cm,
                xlabel={$p_x$},
                ylabel={Probability of $x$},
                xtick={0,0.5,1,1.5,2},
                ytick={0,0.4,0.8,1.2,1.6,2},
                xmin=0, xmax=2.1,
                ymin=0, ymax=2.1,
                grid=major,
                axis lines=left,
                legend pos=north east
            ]

            \addplot[blue, very thick] coordinates {(0,2) (1,2)};
            \addlegendentry{$\int_{I^{1,2}} D_i(p) dF(i)$}

            \addplot[blue, very thick, forget plot] coordinates {(1,0.01) (2,0.01)};

            \addplot[yellow, very thick] coordinates {(0,1) (2,1)};
            \addlegendentry{$S_x$}

            \addplot[blue, dashed] coordinates {(1,0) (1,2)};
            \addlegendentry{$p_y = 1$}
            \end{axis}
        \end{tikzpicture}
    \caption{Non-Existence. $X = \{x,y\}$, $b_i = 1$, and $x \succ_i y$ for all $i \in I$; $T = 2$.}
    \label{fig:non-existence}
\end{figure}
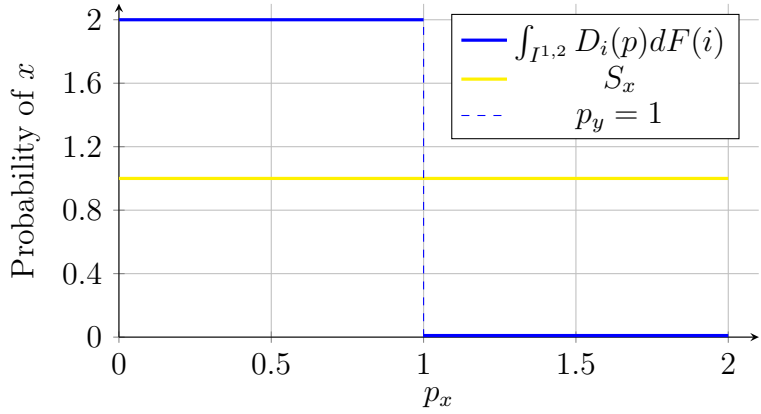

Instead, the market designer would prefer continuous demand to find $p^*$ where $\mathcal{D}^1(p^*) = S$ (Figure \ref{fig:existence}). Price equilibria do not necessarily exist for any random prices, mirroring \qcite{nguyen2025efficiency}. However, as long as the price shock $\xi$ and budget function $b$ satisfy important assumptions, we can guarantee existence.
\begin{assumption}\label{assumption:continuous}
    The price shock $\xi_x$ is continuously distributed for each $x \in X$ with $E[\xi_x] = 0$ and bounded with $\xi_x \in [\underline{\xi}, \bar{\xi}]$.
\end{assumption}
The continuous distribution and boundedness are substantive, but the mean-zero normalization is for convenience. We also assume that budgets are strictly positive and bounded:
\begin{assumption}\label{assumption:budgets}
    $\underline{B} \leq \inf_{i \in I} b_i$ and $\sup_{i \in I} b_i \leq \bar{B}$ for some finite $\underline{B}, \bar{B} \in \mathbb{R}_{++}$.
\end{assumption}
and
\begin{assumption}\label{assumption:bounded}
    $\xi$ is \textit{bounded away from budgets}: $\bar{\xi} < \underline{B}$.
\end{assumption}
Now, we can state our result.

\begin{theorem}\label{theorem:existence}
    Under Assumptions \ref{assumption:continuous} and \ref{assumption:budgets}, a $(t)$-price equilibrium exists for every $t \in [T]$. In addition: under Assumption \ref{assumption:bounded}, there exists some budgets $b$ such that every price equilibrium allocation is greedy.
\end{theorem}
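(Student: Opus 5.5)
The plan is to prove existence by a Kakutani fixed-point argument on a compact price set, using the excess-demand correspondence built from expected (Aumann-integrated) demand, and then to obtain greediness through a lexicographic choice of budgets that decrease steeply in arrival time.

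\textbf{Step 1 (properties of lottery demand).} Fix $t$. I will show that each $L_i(p)$ is nonempty, convex, and compact, and that $p \mapsto L_i(p)$ is upper hemicontinuous on $\mathbb{R}^{|X|}_+$.
\begin{itemize}
\item \emph{Nonemptiness.} The null option $o$ can be taken to have zero price. Since $b_i>0$ (Assumption \ref{assumption:budgets}), the affordable set is nonempty for almost every $\xi$, so $D_i(p+\xi)$ is nonempty.
\item \emph{Measurable selection.} $D_i(p+\xi)$ takes values in a finite set (vertices of the simplex), so a measurable selection exists and the Aumann integral is nonempty and compact.
\item \emph{Convexity.} Because $G$ has a density (Assumption \ref{assumption:continuous}), it is nonatomic, so Lyapunov's theorem gives convexity of the Aumann integral.
\item \emph{Upper hemicontinuity.} $D_i$ fails to be upper hemicontinuous only on the set of $\xi$ where some $p_x+\xi_x=b_i$ exactly, or where ties in price occur among indifferent objects. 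Each of these is a finite union of hyperplanes in $\xi$-space, hence $G$-null by continuity. Off this null set, $D_i(p+\cdot)$ is locally constant in $p$ for fixed $\xi$.
\item \emph{Passing to the integral.} A standard Aumann-integral convergence result (a Fatou-type lemma for correspondences, or dominated convergence applied to selections, which is possible since everything is bounded by the simplex) then transfers upper hemicontinuity to $L_i$.
\end{itemize}
Aggregate demand $\mathcal{D}^t(p)=\sum_i L_i(p)f^{t,T}(i)$ inherits all of these properties.

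\textbf{Step 2 (fixed point).} I will restrict prices to $P=[0,\bar p]^{|X|}$ with $\bar p > \bar B - \underline{\xi}$. At any price $p_x=\bar p$, $x$ is unaffordable almost surely, so $\mathcal{D}^t_x(p)=0 < s_x$. I will use the correspondence $\Phi(p)=\{\min(\bar p,\max(0,p+z-S)) : z\in\mathcal{D}^t(p)\}$, which maps $P$ into itself and is convex-valued and upper hemicontinuous, so Kakutani gives a fixed point $p^*$ with $z^*\in\mathcal{D}^t(p^*)$. Let $e=z^*-S$. The component cases are:
\begin{itemize}
\item If $e_x>0$, then $p^*_x$ sits at $\bar p$, where demand is $0$, which is a contradiction.
\item If $e_x<0$, then $p^*_x=0$.
\item If $p^*_x>0$ and $p^*_x<\bar p$, then $e_x=0$.
\end{itemize}
The null good has $s_o>|I|$, so it never binds. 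This gives conditions 1 and 2 of the definition of a $(t)$-equilibrium.

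One subtlety needs care: convexity after projection, since $\Phi$ is an affine image of a convex set followed by a coordinatewise clamp. The clamp is not linear, so I will instead use the standard Gale–Nikaido/Debreu game formulation: a price player maximizes $p\cdot(z-S)$ over $P$, and I apply Kakutani to the product correspondence.

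\textbf{Step 3 (greedy budgets).} Under Assumption \ref{assumption:bounded}, I will choose budgets strictly decreasing across arrival periods with large gaps, e.g. $b_i=\beta_{t_i}$ with $\beta_1>\beta_2>\dots>\beta_T$ and $\beta_{t}-\beta_{t+1}$ exceeding $\bar\xi-\underline\xi$, all within $[\underline B,\bar B]$.

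Suppose $j$ receives $x$ with positive probability while $x^*\succ_j x$ has leftover supply after cumulative demand up to $t_j$. I will show $p^*_{x^*}$ must be small enough that $x^*$ is affordable for $j$ almost surely, contradicting $j$'s demand. The mechanism is as follows. Only later (poorer) agents demand $x^*$ beyond the earlier consumption, and they can afford it only when $p^*_{x^*}+\xi_{x^*}\le\beta_{t_j+1}<\beta_{t_j}-(\bar\xi-\underline\xi)$. If $p^*_{x^*}>0$, clearing together with leftover supply forces later agents to buy $x^*$ with positive probability, so $p^*_{x^*}\le\beta_{t_j+1}-\underline\xi$. Then $p^*_{x^*}+\xi\le\beta_{t_j}$ always, so $j$ could afford $x^*$, a contradiction. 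If $p^*_{x^*}=0$, then $\bar\xi<\underline B$ makes $x^*$ always affordable to $j$, again a contradiction.

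\textbf{Main obstacle.} I expect the main obstacle to be Step 1's upper hemicontinuity. The pointwise demand $D_i$ is badly discontinuous under indifferences, because of the ``min price'' selection among tied objects. I must verify carefully that the discontinuity set is $G$-null for every $p$ and that limits of selections remain selections almost everywhere.
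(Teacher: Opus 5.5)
Your Steps 1 and 3 follow the paper's route: almost-everywhere upper hemicontinuity of $D_i$, transfer to the Aumann integral, and budgets that drop by more than $\bar\xi-\underline\xi$ between periods. Your Step 3 is the paper's Claim 4 argued by contrapositive, and it is correct.

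The gap is in how you get upper hemicontinuity. You put price ties among indifferent objects into the exceptional set and call that set $G$-null because ``$G$ has a density.'' Assumption \ref{assumption:continuous} only makes each marginal $\xi_x$ continuous; it gives no joint density. Take no tie-breaking, $\xi_x=c$ for all $x$ (Assumption \ref{assumption:perfect-correlation}). It satisfies Assumption \ref{assumption:continuous} and is exactly the case Theorem \ref{theorem:fwt} needs. There $G$ is concentrated on the diagonal, so at any $p$ with $p_x=p_y$ for indifferent $x,y$ the tie event $\{\mathbf{p}_x=\mathbf{p}_y\}$ has probability one. At such $p$ your ``locally constant off a null set'' argument proves nothing, yet Kakutani needs upper hemicontinuity at every $p$ in the box.

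The missing idea is that you only need a closed graph, and ties do not break it. Keep just the budget-boundary set $\{\xi: p_x+\xi_x=b_i \text{ for some } x\}$, which is null by marginal continuity. Off that set, let $\mathbf{p}^k\to\mathbf{p}$ with $a^k\in D_i(\mathbf{p}^k)$. Since $X$ is finite, $a^k$ is eventually a fixed object $a$. Three checks then show $a\in D_i(\mathbf{p})$, even when $\mathbf{p}$ has ties:
\begin{itemize}
\item $a$ is affordable at $\mathbf{p}$; otherwise it is strictly unaffordable there, hence unaffordable at $\mathbf{p}^k$ for large $k$.
\item No strictly preferred object is affordable at $\mathbf{p}$; it would be strictly affordable, hence affordable at $\mathbf{p}^k$.
\item No indifferent object is strictly cheaper than $a$ at $\mathbf{p}$; that strict inequality would persist at $\mathbf{p}^k$.
\end{itemize}
This is the paper's Claim 1, and with it your Fatou-type transfer goes through. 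The Lyapunov step also survives, because continuous marginals alone rule out atoms of $G$.

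A smaller point in the same step: $b_i>0$ alone does not make $o$ affordable almost surely at $p_o=0$. You need $\xi_o\le b_i$, for example from Assumption \ref{assumption:bounded}. Alternatively, adopt the convention, implicit in the paper, that demand is the zero bundle when nothing is affordable.

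Your Step 2 genuinely departs from the paper, and to your advantage. The paper applies Kakutani to $Z(p)=\max\{p,0\}+(\mathcal{D}^t(p)-S)$ on $[0,K]^{|X|}$. That map does not send the box into itself. Its fixed points would also require $S\in\mathcal{D}^t(p)$, which is impossible for $o$ because $s_o>|I|$. You correctly see that coordinatewise clamping destroys convex values. The Gale--Nikaido--Debreu product correspondence $(p,z)\mapsto\arg\max_{q\in P}q\cdot(z-S)\times\mathcal{D}^t(p)$ is a valid self-map with convex values and closed graph. Your case analysis then delivers both equilibrium conditions: $e_x>0$ would force $p^*_x=\bar p$, where $x$ has zero demand, a contradiction.
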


Demand is upper hemicontinuous at all prices that do not exactly lie on the boundary of the budget set which is a measure-zero set. It follows that the Aumann integral is upper hemicontinuous. In addition, Aumann integration also induces convex-valued lottery demand. This allows us to apply Kakutani's Fixed Point Theorem to prove existence. We note that the first part of Theorem \ref{theorem:existence} also applies for combinatorial demand. Our work does not focus on this, but it might be of independent interest for combinatorial allocation with indifferences.

\begin{figure}[t]
    \centering
    \begin{tikzpicture}
            \begin{axis}[
                width=10cm,
                height=6cm,
                xlabel={$p_x$},
                ylabel={Probability of $x$},
                xtick={0,0.5,1,1.5,2},
                ytick={0,0.4,0.8,1.2,1.6,2},
                xmin=0, xmax=2.1,
                ymin=0, ymax=2.1,
                grid=major,
                axis lines=left,
                legend pos=north east
            ]

            \addplot[red, very thick, smooth] coordinates {(0,2) (0.2,1.8) (0.4,1.6) (0.6,1.4) (0.8,1.2) (1,1)};
            \addlegendentry{$\mathcal{D}^1(p)_x$}
                
            \addplot[red, very thick, forget plot] coordinates {(1,0) (2,0)};
                
            \addplot[red, very thick, forget plot] coordinates {(1,0) (1,1)};

            \addplot[black, forget plot, mark=*, mark size=3pt, only marks] coordinates {(1,1)};

            \addplot[yellow, very thick] coordinates {(0,1) (2,1)};
            \addlegendentry{$S_x$}

            \addplot[blue, dashed] coordinates {(1,0) (1,2)};
            \addlegendentry{$p_y = 1$}
            \end{axis}
        \end{tikzpicture}
    \caption{Existence. $X = \{x,y\}$, $b_i = 1$, and $x \succ_i y$ for all $i \in I$; $T = 2$.}
    \label{fig:existence}
\end{figure}
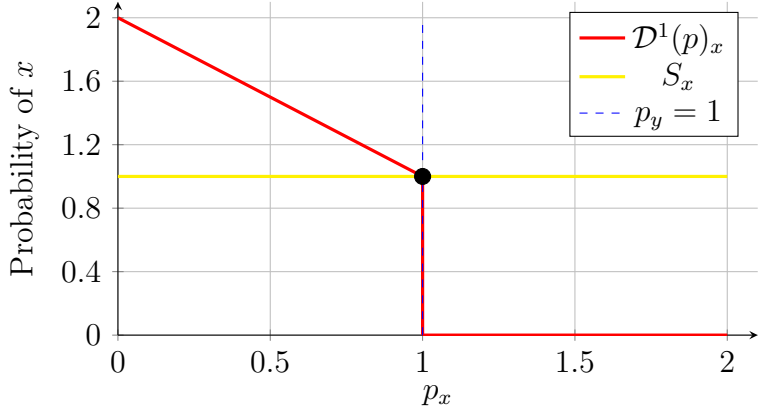

This implies that the market designer can introduce an arbitrarily small amount of noise into the price vector and solve for price equilibria. The Birkhoff-von Neumann Theorem guarantees that market clearing lotteries can be implemented as randomizations over ex-post feasible and efficient allocations. While this is true in the large market, it is not in the online problem, and we demonstrate how to build on this approach in Section 4 for the latter.

The greedy property follows from constructing budgets that decrease over time. If income is sufficiently asymmetric, then whenever there is a realization $\mathbf{p}$ such that an agent $i$ cannot purchase an object $x^*$ that $i$ prefers to an object $x$, it follows that any agent $j$ with $t_j > t_i$ will never be able to purchase $x^*$ since the price shock is bounded. The proof provides the exact expression to construct greedy equilibria: defining $b_i := \tilde{b}^{t_i}$ where $\tilde{b} = (b^1, b^2, ..., b^T)$, then $p^*$ induces a greedy allocation if $\tilde{b}^n > \tilde{b}^{n+1} + \bar{\xi} + \underline{\xi}$.

\subsection*{3.3 \hspace{5pt} Efficiency and Fairness}

Next, we would like to understand when price equilibrium allocations are ordinally efficient. Assumptions \ref{assumption:continuous} through \ref{assumption:bounded} are not sufficient. We state two independent assumptions:
\begin{assumption}\label{assumption:perfect-correlation}
    \textit{No Tie-Breaking (NTB):} $\xi_x = c$ for some $c \in \mathbb{R}$ such that $c \sim H(\cdot)$.
\end{assumption}
and
\begin{assumption}\label{assumption:strong-correlation}
    \textit{Random Tie-Breaking (RTB):} $\xi_x = c + \zeta_x$ for some $c \in \mathbb{R}$ such that $c \sim H(\cdot)$ and $\zeta_x \sim H_x(\cdot)$ independent across each $x \in X$ such that $\zeta_x \in [-z, z]$.
\end{assumption}
It is easier to understand Assumption \ref{assumption:perfect-correlation} in light of Assumption \ref{assumption:strong-correlation}. The latter is a \textit{random tie-breaking} assumption. Continuous, random, and independent price variation across objects causes the event of exact price ties to occur with probability zero. Agents demand the cheapest, most preferred objects. Under Assumption \ref{assumption:strong-correlation}, this implies that agent demand is always single-valued even though, in principle, an agent could be indifferent between objects. It is important to note that this does not imply that we find equilibria for strict preferences with randomly broken ties. Instead, tie-breaking is only relevant when prices for two goods are sufficiently close. 

In contrast, the former assumption is \textit{no tie-breaking}. $p_x = p_y$ necessarily implies $\mathbf{p}_x = \mathbf{p}_y$ for any realizations. Theorem \ref{theorem:existence} proves that, in either case, price equilibria exist. However, these two assumptions induce different efficiency results.

We re-state that an allocation $\mathbf{a}$ is \textit{ordinally efficient} if there does not exist any feasible allocation that stochastically dominates $\mathbf{a}$ for all agents and strictly so for some positive measure set of agents. Our next result states the sufficient condition for efficiency.

\begin{theorem}\label{theorem:fwt}
    Under Assumptions \ref{assumption:bounded} and \ref{assumption:perfect-correlation}, every price equilibrium allocation $\mathbf{a}^*$ is ordinally efficient.
\end{theorem}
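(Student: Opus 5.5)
We argue by contradiction, in the style of the First Welfare Theorem. Let $p^*$ be an equilibrium with allocation $\mathbf{a}^*$, and suppose a feasible $\mathbf{a}'$ weakly stochastically dominates $\mathbf{a}^*(i)$ for every $i$, and strictly for a positive-measure set of types.

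The key simplification from Assumption \ref{assumption:perfect-correlation} is that every realized price vector has the form $\mathbf{p} = p^* + c\mathbf{1}$. So the realization is a one-dimensional family indexed by $c$, and relative prices are fixed. For each $c$ and each $i$, $D_i(p^*+c\mathbf{1})$ is a set of cheapest, most preferred affordable objects. These objects all share a common price $p^*_x + c$, which is minimal among affordable objects in $i$'s top affordable indifference class.

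We will extract a selection $g_i(c) \in D_i(p^*+c\mathbf{1})$ with $\mathbf{a}^*(i) = \int g_i(c)\,dH(c)$. We then establish two pointwise facts for each realization $c$:
\begin{enumerate}
    \item Any object $y$ strictly preferred by $i$ to the chosen object satisfies $p^*_y + c > b_i$.
    \item Any object indifferent to the chosen object costs weakly more.
\end{enumerate}

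The first step is a cost-minimization lemma: for every $i$,
\[
a \succeq_i \mathbf{a}^*(i) \ \text{(stochastic dominance)} \implies p^*\cdot a \geq p^*\cdot \mathbf{a}^*(i),
\]
and the inequality is strict when the dominance is strict. To prove it, we decompose $a$ by upper contour sets. Stochastic dominance means $a$ puts at least as much mass on each upper contour set $U_x=\{y: y\succeq_i x\}$. We then compare against the random demand: at realization $c$, every object in the strict upper contour set of the chosen good costs more than $b_i - c$, and every object in the same class costs at least the chosen price. Integrating over $c$ and using the layer-cake representation $p^*\cdot a = \sum_x (\text{mass of } a \text{ on } U_x)\cdot(\text{price increments along } \succeq_i)$, we obtain $p^*\cdot a \geq \int (p^*\cdot g_i(c))\,dH(c)$.

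The main obstacle is exactly this step. Because $\mathbf{a}^*(i)$ is a mixture across realizations, shifting mass upward need not be globally expensive at $p^*$: it is expensive only relative to $b_i - c$, which varies with $c$. We would handle this by coupling. Since $\mathbf{a}^*(i)$ is a distribution over an ordered family of choices that weakly improves in $i$'s preference as $c$ decreases, we realize $a$ by a monotone (quantile) coupling with $g_i(c)$. For each improved draw, the object $a$ assigns at realization $c$ is weakly preferred to $g_i(c)$. It is therefore either unaffordable at that $c$, and so strictly more expensive in base price, or indifferent and weakly more expensive. Strictness comes from a positive-measure set of $c$ where the mass moves to a strictly better class, which carries a strictly higher base price. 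Assumption \ref{assumption:bounded} keeps each budget set nonempty and the null option affordable, so the comparison is always well defined.

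With the lemma in hand, we sum over types with weights $f(i)$:
\[
\sum_i p^*\cdot\mathbf{a}'(i)f(i) > \sum_i p^*\cdot \mathbf{a}^*(i) f(i) = \sum_x p^*_x s_x,
\]
where the equality uses market clearing at positive prices (condition 2 of the equilibrium definition). Feasibility of $\mathbf{a}'$ gives $\sum_i p^*\cdot\mathbf{a}'(i) f(i) \leq p^*\cdot S$, since $p^* \geq 0$. This is a contradiction, so $\mathbf{a}^*$ is ordinally efficient.

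One case must be checked separately: types whose strict improvement lies only on zero-price goods. For such types the strict inequality in the lemma can fail. We would rule this out as follows. If a strictly better object has price zero, it is affordable at every realization, so by fact 1 it would already be demanded. Hence any strict improvement must move mass to positive-price objects, which restores the strict inequality.
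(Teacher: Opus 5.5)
Your proposal is correct and takes essentially the paper's route. The monotone quantile coupling in $c$ plays the role of the Strassen coupling in Lemma~\ref{lemma:strassen}, and it also makes explicit that the coupling is joint with the price realization. NTB then cancels the common shock, so the realized-price comparisons from preference maximization and expenditure minimization become base-price comparisons, and summing with market clearing gives the first-welfare-theorem contradiction. Your separate zero-price check mirrors Lemma~\ref{lemma:non-wasteful} but is not needed, since a strict improvement $y \succ_i g_i(c)$ already forces $p^*_y > b_i - c \geq p^*_{g_i(c)} \geq 0$.
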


Some distance away from budgets in $\xi$'s bound is necessary to guarantee that price equilibria are non-wasteful. Otherwise, when $p_x = 0$ for some $x$, it might be possible that the price realization causes $\mathbf{p}_x > b_i$ for some $i$ so that $i$ cannot purchase $x$ even though it is free. Assumptions \ref{assumption:bounded} and \ref{assumption:perfect-correlation} allow us to prove a key lemma that $p^* \cdot \mathbf{a}^*(i) \leq p^* \cdot a_i$ for any $a_i$ that stochastically dominates $\mathbf{a}^*(i)$ (and this relation also holds strictly). The latter assumption is necessary to guarantee that there are no "price reversals", that is, $\mathbf{p}_y > \mathbf{p}_x$ even when $p_x > p_y$. Integrating over all agents then implies that no feasible allocation can stochastically dominate the equilibrium allocation. Assumption \ref{assumption:strong-correlation} also yields efficiency under a specific condition:
\begin{theorem}\label{theorem:ties}
    Under Assumptions \ref{assumption:bounded} and \ref{assumption:strong-correlation}:
    \begin{enumerate}
        \item[(i.)] A price equilibrium allocation $\mathbf{a}^*$ for $p^*$ is ordinally efficient if $p^*_x,p^*_y > 0$ implies that $|p^*_x - p^*_y| > 2z$ for all $x,y \in X$.
        \item[(ii.)] $\mathcal{D}^t(p)$ is a continuous function in $p$.
    \end{enumerate}
\end{theorem}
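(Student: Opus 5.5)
For part (ii.) I will show that each lottery demand $L_i(p)$ is single-valued and continuous in $p$. Under Assumption \ref{assumption:strong-correlation}, $\mathbf{p}_x-\mathbf{p}_y=p_x-p_y+\zeta_x-\zeta_y$, and $\zeta_x,\zeta_y$ are independent and continuously distributed. Hence the event $\{\mathbf{p}_x=\mathbf{p}_y\}$ has probability zero for every pair $x\neq y$. The common shock $c$ is continuous, so the event $\{\mathbf{p}_x=b_i\}$ is also null. Off these null events, $D_i(\mathbf{p})$ is a singleton: among $i$'s affordable indifference class she picks the unique cheapest object. So every measurable selection agrees almost surely, and the Aumann integral collapses to the single vector
\[
L_i(p)_x=\Pr\big(x=\arg\min\{\mathbf{p}_y: y\in C_i(\mathbf{p})\}\big),
\]
where $C_i(\mathbf{p})$ is the top affordable indifference class of $i$. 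This probability is an integral of the indicator of a set defined by finitely many strict inequalities among $p_y+c+\zeta_y$ and $b_i$. As $p^n\to p$, these indicators converge pointwise except on the null tie and boundary events, so dominated convergence gives $L_i(p^n)\to L_i(p)$. Summing with weights $f^{t,T}(i)$ yields continuity of $\mathcal{D}^t$.

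For part (i.) I will mimic the argument behind Theorem \ref{theorem:fwt}. The key lemma is that if $a_i\succeq_i\mathbf{a}^*(i)$, then $p^*\cdot a_i\ge p^*\cdot\mathbf{a}^*(i)$, with strict inequality when the dominance is strict. Given the lemma, I conclude as usual. Suppose a feasible $\mathbf{a}'$ dominates. Integrating against $f$ gives $\sum_i p^*\cdot\mathbf{a}'(i)f(i)>\sum_i p^*\cdot\mathbf{a}^*(i)f(i)=p^*\cdot S$. The equality uses condition 2 of the equilibrium definition, since every positively priced good is exhausted. But feasibility gives $\sum_i\mathbf{a}'(i)f(i)\le S$ with $p^*\ge0$, a contradiction.

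To prove the lemma, I decompose $\mathbf{a}^*(i)$ realization by realization. The danger is a \emph{price reversal}: $i$ buys $x$ when $\mathbf{p}_x<\mathbf{p}_y$ even though $p^*_x>p^*_y$, for $x\sim_i y$. Under the separation hypothesis this cannot happen among positively priced goods, because $|\zeta_x-\zeta_y|\le2z<|p^*_x-p^*_y|$ forces $\mathbf{p}_x-\mathbf{p}_y$ to have the sign of $p^*_x-p^*_y$. Within each indifference class, tie-breaking is therefore deterministic among positive-price goods. Moreover, any zero-price good $y$ has $\mathbf{p}_y$ within $2z$ of the price of any other zero-price good. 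Randomization thus only ever occurs among goods of identical base price $p^*$, and it is payoff-irrelevant in the $p^*$ metric.

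Consequently, conditional on $c$, the realized choice always solves: pick the top affordable class, then the $p^*$-cheapest member of it. By Assumption \ref{assumption:bounded}, free goods are always affordable, so nothing is wasted. Now take $a_i$ that stochastically dominates $\mathbf{a}^*(i)$. I will write $\mathbf{a}^*(i)$ as a mixture over $c$ of these optimal bundles and proceed as in the proof of Theorem \ref{theorem:fwt}. Shifting mass of $a_i$ upward in $\succeq_i$ relative to $\mathbf{a}^*(i)$ moves it either to goods that were unaffordable at those realizations, which are more expensive in $p^*$, or within an indifference class to goods that are weakly more expensive in $p^*$. Strict dominance forces strictly more mass on a strictly preferred class. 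That class must have been unaffordable with positive probability, which gives a strict increase in cost.

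I expect the main obstacle to be making this mass-transfer step precise when affordability depends on the realized $c$. The order of affordable classes across realizations must be handled, since a class that is affordable at one $c$ may not be at another. I would handle it by conditioning on $c$ and using the threshold structure: since the perturbation is common up to $\pm z$, $i$'s top class is monotone in $c$. I would then apply the Theorem \ref{theorem:fwt} lemma to the NTB-like conditional problem.
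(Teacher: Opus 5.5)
Part (ii.) follows the paper's argument: tie and boundary events are null, demand is single-valued and locally constant off them, and an ordinary integral with dominated convergence gives continuity. (Nullity of $\{\mathbf{p}_x=b_i\}$ should come from continuity of $\xi_x$ in Assumption \ref{assumption:continuous}; continuity of $c$ is not assumed.)

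The genuine gap is in part (i.), at ``randomization thus only ever occurs among goods of identical base price.'' You treat positive--positive and free--free pairs but not the mixed case. Take a good $x$ with $0<p^*_x\le 2z$ and a free good $y$ with $x\sim_i y$. The hypothesis (read for distinct $x,y$) says nothing about this pair, and $\mathbf{p}_x<\mathbf{p}_y$ exactly when $\zeta_y-\zeta_x>p^*_x$, which has positive probability. On that event $i$ buys the priced good. The lottery that moves this mass to $y$ weakly dominates $\mathbf{a}^*(i)$ yet is strictly cheaper at $p^*$, so your key lemma fails.

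This is not only a proof gap; with positive--positive separation alone the conclusion can fail. Take:
\begin{itemize}
\item $c\sim U[-0.1,0.1]$ and $\zeta_w\sim U[-0.3,0.3]$;
\item all budgets $\tfrac12$, so $\bar\xi=0.4<\underline{B}$;
\item $s_x=s_y=1$;
\item mass $\tfrac12$ of a type with $x\sim y\succ o$, and mass $\tfrac32$ (over two periods) of types with $x\succ o\succ y$.
\end{itemize}
With $p_y=p_o=0$, demand for $x$ is $\tfrac32\Pr(\xi_x\le\tfrac12-p_x)+\tfrac12\Pr(\zeta_y-\zeta_x>p_x)$. This exceeds $1$ for $p_x\le 0.1$ and equals $\tfrac12$ at $p_x=0.6$, so by part (ii.) there is an equilibrium with $p^*_x\in(0.1,0.6)$. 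Only $x$ is priced, so the hypothesis holds vacuously. Yet the indifferent type buys $x$ with positive probability while the other types get $o$ with positive probability. Giving the indifferent type $y$, which is in slack, and passing the freed $x$ to the others is an ordinal improvement.

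The paper's proof avoids this only because it invokes Lemma \ref{lemma:strassen}(2), whose hypothesis is separation for \emph{all} pairs, which is stronger than the theorem's wording. Since $o$ is always free, that hypothesis forces $p^*_x>2z$ for every positively priced $x$. You need to add this condition. With it, your free--free observation is a real refinement, because the all-pairs version also needlessly rules out any second free good besides $o$.

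The other weak spot is the mass-transfer step you flagged. Conditioning on $c$ does not fix $i$'s top affordable class, because affordability depends on $c+\zeta_w$, so monotonicity in $c$ is not the right handle. More basically, a single lottery $a_i$ cannot be compared ``realization by realization'' until it is coupled to the prices. The paper does this with Strassen's theorem. It places the realized demand and a draw from $a_i$ on one probability space, with the draw pointwise weakly preferred. It then compares realized prices: $y\succ_i x$ gives $\mathbf{p}_y>b_i\ge\mathbf{p}_x$, and $y\sim_i x$ gives $\mathbf{p}_y\ge\mathbf{p}_x$. Finally it converts these to base prices using the $2z$ separation.

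A coupling-free alternative closer to your idea uses the same pointwise comparison:
\begin{itemize}
\item It shows that every $v$ in the support of $\mathbf{a}^*(i)$ has $p^*_v=\min_{w\succeq_i v}p^*_w$.
\item Define $h(v):=\min_{w\succeq_i v}p^*_w$. Then $h$ is $\succeq_i$-monotone, lies below $p^*$, and agrees with $p^*$ on that support.
\item Stochastic dominance then gives $p^*\cdot a_i\ge\sum_v a_{i,v}h(v)\ge p^*\cdot\mathbf{a}^*(i)$.
\item Strictness comes from the strict budget inequality at realizations where $i$ buys below a strictly preferred class.
\end{itemize}
Your final aggregation step is the same as the paper's.
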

The second statement in Theorem \ref{theorem:ties} implies that aggregate demand (in fact, all individuals' demands) collapses to a single value under random tie-breaking. The advantage of this tie-breaking method is that equilibrium computation is significantly simplified since it becomes unnecessary to approximate multi-valued demand correspondences. Condition (i) in the Theorem provides a sufficient condition to check that the computation produces an ordinally efficient allocation.

The condition implies that whenever prices are sufficiently close such that aggregate demand considers these objects to be "indifferent", random tie-breaking can cause price reversals that over- or under-price objects. Though this is endogenous, there is good reason to believe that it holds in some markets. We can make $z$ arbitrarily small so that, relative to the common shock $c$, each shock $\zeta_x$ barely impacts aggregate demand for each $x \in X$ with the qualification that prices are not exactly equal. A move from no tie-breaking to random tie-breaking with arbitrarily small $z$ should then be almost exactly market clearing. Further, under the assumption that prices are not exactly equal, aggregate demand (under Assumption \ref{assumption:perfect-correlation}) is locally continuous in prices. Therefore, we conjecture that a small movement in the price should suffice to clear the market exactly. This argument implies that it might be possible to strengthen Theorem \ref{theorem:ties} to $|p^*_x - p^*_y| > 0$.

In contrast, if $\xi_x = \zeta_x$, this intuition is definitively false. The equilibrium price for $x$ is highly sensitive to the distribution of $\zeta_x$ because any positive price for $x$ will concentrate around the budget constraint as variance decreases. For example, suppose that $b_i = 1$ for all $i \in I$. There is no equilibrium where $p_x - z > 1$, and any equilibrium where $1 > p_x + z$ is also an equilibrium for $p_x = 0$. Hence, we can assume that $p_x \in [1 - z, 1 + z]$. We should expect it to be very likely that $|p^*_x - p^*_y| < z$ for two positively priced goods $x,y$. We conclude that it is crucial to appropriately calibrate the price error terms. The usefulness of Assumption \ref{assumption:strong-correlation} will come into focus in the next section as we consider large market asymptotics.

Next, we turn to fairness properties.
\begin{definition}
    An allocation $\mathbf{a}$ is \textit{envy-free} if there does not exist any $i,j \in I$ such that $\mathbf{a}(j) \succ_i \mathbf{a}(i)$.
\end{definition}
It is immediate that greedy allocations generically cannot be envy-free since earlier arrivals will have strict priority over later arrivals. Instead, we consider a relaxation called \textit{equal-type envy-freeness} (ET-EF).
\begin{definition}
    An allocation $\mathbf{a}$ is equal-type envy-free if there exists some complete, transitive order $\eqtrir$ over $I$ with strict relation $\trir$ such that $\mathbf{a}(j) \succ_i \mathbf{a}(i)$ implies that $j \trir i$
\end{definition}
In other words, no agent envies another agent of the same "type" according to $\eqtrir$. Our next result shows that price equilibrium allocations are ET-EF.

\begin{theorem}\label{theorem:fairness}
    Under Assumption \ref{assumption:bounded}, there exists budgets $b$ such that every price equilibrium allocation $\mathbf{a}^*$ is equal-type envy-free.
\end{theorem}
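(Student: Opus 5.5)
The plan is to take budgets that depend only on arrival time, $b_i := \tilde{b}^{t_i}$ with $\tilde{b} = (\tilde{b}^1,\dots,\tilde{b}^T)$ strictly decreasing, and to let $\eqtrir$ be the order induced by arrival time: $j \eqtrir i$ if and only if $t_j \leq t_i$. Then $j \trir i$ means $t_j < t_i$. It therefore suffices to show that if $t_j \geq t_i$, then $i$ does not envy $j$, that is, $\mathbf{a}^*(j) \not\succ_i \mathbf{a}^*(i)$. Any decreasing budget profile will do, in particular the one from Theorem \ref{theorem:existence}, so the result is compatible with greediness. Completeness and transitivity of $\eqtrir$ are immediate because it is a weak order pulled back from $[T]$.

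The core step is a realization-by-realization revealed-preference argument. Fix $p^*$ and selections $g_i(\mathbf{p}) \in D_i(\mathbf{p})$ and $g_j(\mathbf{p}) \in D_j(\mathbf{p})$ with $\mathbf{a}^*(i) = \int g_i\,dG$ and $\mathbf{a}^*(j) = \int g_j\,dG$. Since $b_j \leq b_i$, we have $\mathbf{p}\cdot g_j(\mathbf{p}) \leq b_j \leq b_i$ at every realization, so $g_j(\mathbf{p})$ lies in $i$'s budget set. Because $D_i(\mathbf{p})$ consists of $\succeq_i$-maximal affordable objects, we get $g_i(\mathbf{p}) \succeq_i g_j(\mathbf{p})$ pointwise, for almost every $\mathbf{p}$. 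For each $x \in X$ this gives $\mathbbm{1}\{g_i(\mathbf{p}) \succeq_i x\} \geq \mathbbm{1}\{g_j(\mathbf{p}) \succeq_i x\}$. Integrating against $G$ gives $\sum_{y \succeq_i x} \mathbf{a}^*(i)_y \geq \sum_{y \succeq_i x} \mathbf{a}^*(j)_y$ for every $x$. Hence $\mathbf{a}^*(i)$ weakly stochastically dominates $\mathbf{a}^*(j)$ under $\succeq_i$.

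It remains to rule out strict envy. Suppose $\mathbf{a}^*(j) \succ_i \mathbf{a}^*(i)$. Combined with the reverse weak dominance above, all the cumulative sums would be equal. The strict inequality would then fail at every $x$, which is a contradiction. So no agent envies an agent with the same or a later arrival time, and envy can only point toward strictly earlier arrivals, i.e.\ toward $j \trir i$.

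I expect the main obstacle to be measure-theoretic bookkeeping rather than economics. The definition of $D_i$ involves a further cheapest-among-maximal selection, so I need to confirm that every element of $D_i(\mathbf{p})$ is still $\succeq_i$-maximal among affordable objects. This holds because the min is taken within the max set. I also need the selections to be measurable, so that integrating the indicator inequalities is legitimate; this follows from the Aumann integral definition. Finally, the budget sets must be nonempty, because otherwise demand could be empty at some realizations. This is where I would invoke Assumption \ref{assumption:bounded}: the null option $o$ has $p^*_o = 0$ in equilibrium, since $s_o > |I|$ means it is never exhausted, and $\bar{\xi} < \underline{B}$ then keeps $o$ affordable for every agent at every realization. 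Strictly speaking, for equal-time agents the argument does not even need strict budget monotonicity; weak monotonicity $b_j \leq b_i$ suffices.
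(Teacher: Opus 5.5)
Your proof is correct, and it takes a different route from the paper. The paper gives only the sketch that follows the theorem's statement. With budgets decreasing in arrival time, $\mathbf{a}^*(j) \succ_i \mathbf{a}^*(i)$ yields $p^*\cdot\mathbf{a}^*(j) > p^*\cdot\mathbf{a}^*(i)$ via Lemma~\ref{lemma:strassen}. The paper then reads this in the usual pseudomarket revealed-preference way: $j$ affords something $i$ cannot, so $b_j > b_i$ and hence $j \trir i$.

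That argument is short and reuses the price lemma already needed for efficiency, but it has two weaknesses. First, it inherits that lemma's hypotheses (Assumption~\ref{assumption:perfect-correlation}, or Assumption~\ref{assumption:strong-correlation} with the $2z$ gap), and neither appears in the statement of the theorem. Second, its final step treats the budget as a constraint on $p^*\cdot a$. In this model the budget binds realization by realization on $\mathbf{p}\cdot g(\mathbf{p})$, and agents do not choose lotteries subject to an expected budget, so passing from a comparison of $p^*$-values to a comparison of budgets is not immediate.

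Your coupling argument avoids prices entirely. Because all agents face the same shock $\xi$, you compare the selections $g_i$ and $g_j$ pointwise and use $b_j \le b_i$ to get $g_i(\mathbf{p}) \succeq_i g_j(\mathbf{p})$ almost surely. Integrating the indicators then gives weak stochastic dominance of $\mathbf{a}^*(i)$ over $\mathbf{a}^*(j)$ in $i$'s ordering. This buys three things:
\begin{itemize}
\item it holds for any specification of the price shock;
\item it holds for any budgets nonincreasing in arrival time, including the greedy budgets of Theorem~\ref{theorem:existence};
\item it delivers a stronger conclusion than the theorem asks for, since $\mathbf{a}^*(i) \succeq_i \mathbf{a}^*(j)$ is stronger than $\mathbf{a}^*(j) \not\succ_i \mathbf{a}^*(i)$.
\end{itemize}

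One minor point: your appeal to Assumption~\ref{assumption:bounded} for nonempty budget sets is harmless but unnecessary. The fact that $\mathbf{a}^*(i) \in L_i(p^*)$ already supplies selections $g_i(\mathbf{p}) \in D_i(\mathbf{p})$ for almost every $\mathbf{p}$.
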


We let the \textit{type} $i$ of an agent be her arrival time $t_i$. Using the above budgets, if $\mathbf{a}^*(j) \succ_i \mathbf{a}^*(i)$, then $p^* \cdot \mathbf{a}^*(j) > p^* \cdot \mathbf{a}^*(i)$, implies that $b_j > b_i$ thus $j \trir i$. Hence, no agent envies another agent of a (weakly) lower type. Note that while this implies that equal-type agents do not envy each other's lotteries, it does not imply that there will be no ex-post envy after resolving the lotteries. Ultimately, this is a fundamental limitation of indivisible object allocation.

\section*{4 \hspace{5pt} The Online Matching Problem}

We turn to our analysis of the online matching problem in this section. Using our large market equilibrium concept, we will construct an online matching mechanism that is ex-post efficient as the number of arrivals per period grows large and is always ex-post greedy.

\subsection*{4.1 \hspace{5pt} The Sequential Equilibrium Mechanism}

The \textit{online empirical distribution} at $k$ is $\tilde{F}(i;k) = (\tilde{F}^1, \tilde{F}^2, ..., \tilde{F}^k, F^{k+1}, ..., F^T)$, which replaces the conditional distribution at times $t \leq k$ to match the realized arrivals: $\tilde{F}^k(i) = \sum_{j \in \tilde{I}^k} \mathbbm{1} \{j \leq i\}/|\tilde{I}^k|$.

Our method proposes an online matching mechanism that leverages price equilibria to achieve asymptotic efficiency while respecting the greedy constraint ex-post. In particular, we compute a large market equilibrium $p^t$ at each time $t$ then randomly allocate objects to arrivals according to their lotteries. The formal description of the Sequential Equilibrium Mechanism (SEM) is: \\~\

\textit{Algorithm: Sequential Equilibrium Mechanism}
\begin{enumerate}
    \item If $t = 1$, initialize $S^1 := S$. At time $t$, compute a price equilibrum $p^t$ for the fundamentals $\nu^t = (\tilde{F}(\cdot;t), S^t)$.
    \item Draw an ex-post allocation $\tilde{a}^t$ from an equilibrium allocation $\mathbf{a}^t$ and allocate $\tilde{a}^t(\tilde{i})$ to each $\tilde{i} \in \tilde{I}^t$. Set $S^{t + 1} := S^t - \sum_{i \in \tilde{I}^t} \tilde{a}^t(i) \tilde{f}^t(i)$.
    \item Continue to $t + 1$, or terminate if $t = T$ or $S^{t + 1} = \{0\}^{|X|}$. \\~\
\end{enumerate}

SEM defines a class of mechanisms that depend on the equilibrium selection at each period $t$. Our main result is that SEM achieves all of our desiderata if the underlying market satisfies a key condition.

We say that a mechanism $\pi$ is \textit{asymptotically efficient} if its ex-post allocation converges to an ordinally efficient allocation as the market size grows large. Specifically, \textit{replica markets} are a sequence of markets $(\mathcal{M}_n)_{n \in \mathbb{N}}$ such that each market has the same fundamentals but with $n$ times the mass of agents and supply: $\mathcal{M}_n = (X, nS, I, T, nF)$. The realized arrivals for replica $n$ at time $t$ are $\tilde{I}^t(n)$ (a multi-set). Define $\pi_n = (\pi_n^t)_{t \in [T]}$ by $\pi_n^t := \pi_{\mathcal{M}_n}^t(\tilde{I}^{1,t}(n))$.
\begin{definition}
    An online matching mechanism $\pi$ is asymptotically efficient if, for all $\epsilon > 0$:
    \[\lim_{n \rightarrow\infty} \Pr\bigg(\|\pi_n - \tilde{a}^*\|_\infty < \epsilon \bigg) = 1\]
    for some ordinally efficient allocation $\tilde{a}^*$\footnote{The infinity norm $\|\cdot\|_\infty$ is defined as the maximum absolute value of the components of a vector. See the Appendix for further details.}.
\end{definition}

The $(t)$-price equilibria for fundamentals $\nu$ are $\mathcal{P}^t(\nu)$. Similarly, the set of equilibria allocations given fundamentals $\nu$ are $\mathcal{E}^t(\nu)$.
\begin{definition}
    The fundamentals $\nu$ are regular if, for all $t \in [T]$, there exists some open ball $U$ in the metric space defined by $\|\cdot\|_\infty$ around $\nu$ and a finite number $N$ of continuous functions $\lambda_1(\mu), \lambda_2(\mu), ..., \lambda_N(\mu)$ from fundamentals to equilibrium prices such that for any $\mu \in U$:
    \begin{enumerate}
        \item[(i.)] $\lambda_n(\mu)$ is non-empty for each $n$.
        \item[(ii.)] If $p \in \mathcal{P}^t(\mu)$ then $p \in \lambda_n(\mu)$ for some $n$.
    \end{enumerate}
    A market is regular if its fundamentals are regular. In contrast, a market is \textit{allocation regular} (for short, $\ell$-regular) we can replace the above with maps from fundamentals to allocations such that (ii): if $\mathbf{a} \in \mathcal{E}^t(\mu)$ then $\mathbf{a} \in \lambda_n(\mu)$ for some $n$.
\end{definition}
Regularity is satisfied for almost all fundamentals (\qcitenp{debreu-1970}). For the following result, we let Assumptions \ref{assumption:continuous} through \ref{assumption:bounded} hold.

\begin{theorem}\label{theorem:sem}
    SEM is equal-type envy-free, strategyproof, and greedy. Moreover:
    \begin{enumerate}
        \item[(i.)] Under Assumption \ref{assumption:perfect-correlation}, if $\mathcal{M}$ is $\ell$-regular, then there exists a selection of SEM that is asymptotically efficient.
        \item[(ii.)] Under Assumption \ref{assumption:strong-correlation}, if $\mathcal{M}$ is regular and the condition in Theorem \ref{theorem:ties} holds, then there exists a selection of SEM that is asymptotically efficient.
    \end{enumerate}
\end{theorem}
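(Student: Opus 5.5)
The plan is to prove the three "always" properties period by period from the static results, and then obtain asymptotic efficiency from a law of large numbers combined with a continuity (regularity) argument and the efficiency Theorems \ref{theorem:fwt} and \ref{theorem:ties}.

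\emph{Greedy and equal-type envy-freeness.} At each $t$, SEM computes a price equilibrium $p^t$ for the fundamentals $\nu^t=(\tilde F(\cdot;t),S^t)$. Here the periods before $t$ have already been consumed. So $\nu^t$ is itself a market whose first active period is $t$, with supply $S^t$.

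We use the decreasing budgets $\tilde b^n>\tilde b^{n+1}+\bar\xi+\underline\xi$ from Theorem \ref{theorem:existence}. By the second part of that theorem, the equilibrium allocation $\mathbf a^t$ is greedy in this market. In particular, current arrivals obtain lotteries supported on their most preferred objects among those whose remaining supply $S^t$ is not exhausted by current arrivals. Next, draw $\tilde a^t$ by a Birkhoff--von Neumann decomposition of the current arrivals' lotteries. Every realization then assigns an object in the support of each lottery, and the support respects the greedy ordering. Hence $\pi$ is greedy ex post.

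For ET-EF, the type is the arrival time. Theorem \ref{theorem:fairness} applied to $\nu^t$ shows that agents arriving at $t$ do not envy each other's lotteries. Agents from different periods are ordered by $\trir$ through the budget ranking. This suffices because earlier periods are fixed and never revisited.

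\emph{Strategyproofness.} Misreporting changes a single agent's preferences, holding her arrival time fixed. I would argue in two steps, and this is the part I am least sure of.

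First, within a period, each agent faces the equilibrium price $p^t$ as a price-taker. Since $L_i(p^t)$ is the expectation of her demand at each realized price, no report yields a lottery that stochastically dominates her truthful one at the same prices.

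Second, her report changes $\tilde F^t$ by $O(1/|\tilde I^t|)$. Regularity gives continuous equilibrium selections $\lambda_n$, so $p^t$ moves by a vanishing amount. The gain from manipulation therefore vanishes as arrivals grow, which is the ``with probability one'' sense advertised in the introduction.

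\emph{Asymptotic efficiency.} Take replica markets $\mathcal M_n$. Fix the equilibrium selection at $t$ to be the continuous branch $\lambda$ of the regular ($\ell$-regular in case (i)) fundamentals that passes through the chosen $(1)$-equilibrium of the original market; this branch is $a^*$ in case (i), or yields the price $p^*$ in case (ii).

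By the weak law of large numbers, $\tilde F^t_n\to F^t$ in probability. Realized draws $\tilde a^t$ concentrate around $\mathbf a^t$, with deviation $O_p(n^{-1/2})$ per unit mass. Hence $S^{t+1}_n/n\to S^t-\sum_i \mathbf a^t(i)f^t(i)$ in probability.

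Induct over $t$ using continuity of $\lambda$. At each period the fundamentals $\nu^t_n$ lie in the ball $U$ with probability tending to one. The selected equilibrium (or equilibrium allocation) therefore converges in probability to the one from the original $(1)$-equilibrium restricted to the remaining periods. The finite $T$ makes the union bound harmless.

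Consequently $\pi_n$ converges in $\|\cdot\|_\infty$ to $\mathbf a^*$. In case (i), Theorem \ref{theorem:fwt} gives that $\mathbf a^*$ is ordinally efficient. In case (ii), Theorem \ref{theorem:ties}(i) gives the same under the gap condition, and Theorem \ref{theorem:ties}(ii) supplies the continuity needed to use price-regularity rather than allocation-regularity.

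\emph{Main obstacle.} The main obstacle is the consistency step: showing that the continuation of an original $(1)$-equilibrium along the realized path is again a $(t)$-equilibrium of $\nu^t$. In other words, the restriction of $\mathbf a^*$ to periods $\ge t$, with supply net of the earlier consumption, must still clear the market with the same prices. Only then does the limit path equal $\mathbf a^*$ rather than some other equilibrium sequence.

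I would first verify this exactly in the limit market. Demand of future types depends only on prices and budgets. Supply in $\nu^t$ equals $S$ minus exactly the demand of earlier types, so both equilibrium conditions are inherited. Regularity is then used only to pass from the limit to the finite replicas.
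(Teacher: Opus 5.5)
Your asymptotic-efficiency argument follows the paper's proof closely. It has the same three ingredients: convergence of the empirical distribution, concentration of realized consumption around lottery consumption, and an induction over $t$. The consistency step you flag is exactly Lemma \ref{lemma:continuation}, and the paper proves it by the same supply-netting computation you sketch. Your greedy argument is also sound, provided the decomposition preserves integer object totals, so that an object fractionally exhausted by current arrivals is exhausted in every realization.

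\textbf{The gap: strategyproofness.} The theorem asserts exact strategyproofness, but your argument only delivers the asymptotic property of the Corollary. In a finite market the agent is not a price taker: her report changes $\tilde F^t$ and hence $p^t$. Your second step only shows that this effect vanishes as $|\tilde I^t|$ grows. Under no tie-breaking you would also need $\ell$-regularity, not just price continuity, to control her lottery.

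The missing idea is that exact strategyproofness comes from greediness and the budget ordering, not from price taking. The paper accordingly claims it only when $|\tilde I^t|=1$. In that case:
\begin{itemize}
\item The truthful arrival has the largest budget among remaining agents. The greedy part of Theorem \ref{theorem:existence}, applied to $\nu^t$, says that any $x$ she strictly prefers to an object in her support must be exhausted by her alone.
\item She places probability strictly less than one on $x$, while $s^t_x$ is an integer. This forces $s^t_x=0$, so her truthful lottery lies on her top indifference class among objects available in $S^t$.
\item Her report cannot change $S^t$. Any equilibrium lottery she obtains by misreporting is supported on objects with positive residual supply, so it is weakly stochastically dominated by the truthful one.
\end{itemize}

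\textbf{A smaller omission: equal-type envy-freeness.} Theorem \ref{theorem:fairness} handles agents arriving in the same period. But with $\trir$ ranking earlier arrivals higher, you must also rule out an earlier arrival envying a later one, and ``earlier periods are fixed'' does not do that.

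It follows from the greedy property you already have. Suppose $\alpha^k(j)\succ_i\alpha^t(i)$ with $k>t$. Then some $x$ in the support of $j$'s lottery satisfies $x\succ_i y$ for some $y$ in the support of $i$'s lottery. Greediness then says $x$ was exhausted by the end of period $t$, so $s^k_x=0$, contradicting $x$ being in the support of $j$'s lottery. The paper's Claim 4 makes the same point through prices: an $x$ with positive residual supply must carry a zero price at $t$, so $i$ would have bought it.
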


$\ell$-regularity is necessary for stability of the large market convergence process. Under random tie-breaking, $\ell$-regularity collapses to regularity\footnote{The proof is immediate: every agent's demand is continuous in equilibrium prices by Theorem \ref{theorem:ties}, therefore continuity of the equilibrium prices commutes to continuity of equilibrium allocations.}. Briefly, we note that $\ell$-regularity is likely to hold generically under no tie-breaking. Existing results have shown that essentially every economy is regular (\qcitenp{debreu-1970}). $\ell$-regularity follows if either (i) there are no price ties in equilibrium or (ii) if there are price ties, local perturbations in fundamentals preserve the ties. In either case, for sufficiently close prices, all agents demand similar allocations. Hence, regularity commutes to $\ell$-regularity. We conduct simulations in the Appendix that show that 99.9\% of perturbations satisfy condition (ii) in randomly generated markets.

We propose that asymptotic efficiency is a solid foundation for utilizing SEM in online allocation. Example \ref{example:impossibility} demonstrates that no mechanism can generically be ex-post ordinally efficient. Asymptotic efficiency implies that SEM is "correct" in the sense that, absent the feature of the environment that renders ex-post ordinal efficiency impossible (finiteness), SEM is ex-post ordinally efficient. Moreover, its performance is likely to improve in larger markets. Nevertheless, we acknowledge that this is not an airtight rationale; for small $n$, it is possible that SEM will perform poorly. Our primary focus in our simulations is to analyze SEM's empirical effects versus a status-quo mechanism.


The critical facet of the design of SEM that guarantees greediness and asymptotic efficiency simultaneously is its reliance on sequential equilibrium. It is not feasible to compute a $p^1$ equilibrium to guide the online allocation because when the realized arrivals deviate from the large market distribution, some objects could become over- or under-demanded. The period-$1$ equilibrium allocation does not internalize these disturbances, and, as a result, the ex-post allocations can fail to be greedy. Sequential equilibria resolve this issue through updating the supply vectors after each period which then necessitates re-computing price equilibria. Together with stability of the equilibria allocations, we can achieve convergence to a $(1)$-equilibrium allocation, which is ordinally efficient under either set of assumptions.

Moreover, any deterministic allocation in the support of an ordinally efficient allocation is Pareto efficient (\qcitenp{ramezanian2022robust}), and ordinal efficiency is invariant to the support of objects allocated to agents (\qcitenp{alva2024efficiency}). This suggests that the probability that any SEM deterministic allocation is inefficient also converges to zero.

Of the properties in Theorem \ref{theorem:sem}, strategyproofness only holds under the assumption that $|\tilde{I}^t| = 1$ for all $t$. Nevertheless, SEM is also \textit{strategyproof with probability one} under any sets of realized arrivals as $n$ grows large. Any allocation that an agent can receive by misreporting either (a) converges to an allocation that does not stochastically dominate the allocation she receives from truthfulness or (b) converges to the agent's allocation under truthfulness.

$i$'s allocation under $\pi_n$ when truthful is $\pi_{n,i} := \pi_{n,i}^T(\tilde{I}(n))$. $i$'s allocation when misrepresenting as an agent $j$ is $\pi_{n,(i,j)} := \pi_{n,i}^T(\{j\} \cup \tilde{I}(n) \setminus \{i\})$.

\begin{definition}
    $\pi$ is \textit{strategyproof with probability one} (SP1) if, for any $\epsilon > 0$:
    \[\lim_{n \rightarrow \infty} \Pr\bigg(\pi_{n,i} \succeq_i \pi_{n,(i,j)} \lor \|\pi_{n,i} - \pi_{n,(i,j)}\|_\infty < \epsilon \bigg) = 1\]
    for all $i \in \tilde{I}(n)$ for any misrepresentation $j \in I$ such that $t_j = t_i$.
\end{definition}

\begin{corollary}
    Under the conditions for Theorem \ref{theorem:sem}, there exists a selection of SEM that is strategyproof with probability one.
\end{corollary}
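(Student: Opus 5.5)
We prove the corollary by reusing the convergence argument behind the asymptotic efficiency part of Theorem \ref{theorem:sem}. We apply it twice: once to the truthful arrival profile $\tilde{I}(n)$ and once to the deviating profile $\{j\} \cup \tilde{I}(n) \setminus \{i\}$. Fix a selection of SEM that is asymptotically efficient, as guaranteed by Theorem \ref{theorem:sem}, built from the continuous branches $\lambda_1,\dots,\lambda_N$ given by ($\ell$-)regularity. The key observation is that in the replica market $\mathcal{M}_n$ a single misreport moves the online empirical distribution $\tilde{F}(\cdot;t)$ by at most $O(1/n)$ in $\|\cdot\|_\infty$, and moves the residual supply $S^{t+1}$ by at most $O(1/n)$ per period. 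So at every period the fundamentals under the truthful and deviating profiles eventually lie in the same open ball $U$ around the limiting fundamentals $\nu^t$. Since $T$ is finite, we can induct over $t$.

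\textbf{Step 1 (coupling and convergence).} We couple the two profiles on the same realization of arrivals for everyone other than $i$. By the law of large numbers, with probability tending to one, $\tilde{F}(\cdot;t)$ lies within any $\delta$ of $F$ for every $t$, under both profiles. We then induct on $t$. Suppose the period-$t$ fundamentals $\nu^t_n$ and $\nu^{t}_{n,(i,j)}$ converge in probability to the same limit $\nu^t$. The selection uses the same continuous branch $\lambda_{m_t}$ under both profiles, which is possible because for $n$ large both fundamentals lie in $U$. Continuity then makes the equilibrium prices converge, in the regular case with Assumption \ref{assumption:strong-correlation} where Theorem \ref{theorem:ties}(ii) makes individual demand continuous. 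In the $\ell$-regular case with Assumption \ref{assumption:perfect-correlation}, the allocations $\mathbf{a}^t$ converge directly. The realized ex-post allocations $\tilde{a}^t$ concentrate around the lottery mass, again by the law of large numbers, so $S^{t+1}$ converges as well and the induction goes through.

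\textbf{Step 2 (limiting comparison).} Agent $i$'s truthful allocation $\pi_{n,i}$ is her lottery under the period-$t_i$ equilibrium $p^{t_i}_n$, namely the element of $L_i(p^{t_i}_n)$ that the selected branch picks out. Her misreport allocation $\pi_{n,(i,j)}$ is the branch's element of $L_j(p^{t_i}_{n,(i,j)})$. By Step 1 we have $p^{t_i}_n - p^{t_i}_{n,(i,j)} \to 0$. Two cases arise.
\begin{enumerate}
    \item[(a)] The limits of $\pi_{n,i}$ and $\pi_{n,(i,j)}$ coincide. Then $\|\pi_{n,i}-\pi_{n,(i,j)}\|_\infty<\epsilon$ with probability tending to one.
    \item[(b)] The limits differ. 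Then in the limit market with price $p^*:=p^{t_i}$, the limit $\ell_i$ of $\pi_{n,i}$ lies in $L_i(p^*)$ and the limit $\ell_j$ of $\pi_{n,(i,j)}$ lies in $L_j(p^*)$. Since $t_j = t_i$, we have $b_i=b_j$, so $\ell_j$ is affordable to $i$ pointwise in every price realization. Because $i$ picks a $\succeq_i$-maximal cheapest object pointwise, a realization-by-realization comparison gives $\ell_i \succeq_i \ell_j$ in the stochastic dominance sense. This is the same budget argument that proves Theorem \ref{theorem:fairness} (ET-EF).
\end{enumerate}

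\textbf{Main obstacle.} The hard part is converting the limiting weak dominance $\ell_i \succeq_i \ell_j$ into the finite-$n$ disjunction. Weak stochastic dominance is a closed condition, not an open one, so $\pi_{n,i}$ need not dominate $\pi_{n,(i,j)}$ exactly for finite $n$.

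To handle this, we work coordinatewise with the cumulative sums $\sum_{y\succeq_i x}$. For each $x$, either the limiting inequality is strict, in which case it persists for large $n$, or it holds with equality. If all coordinates that differ are strict, the first disjunct holds. If no coordinate is strict, then equality holds in every cumulative sum, so $\ell_i = \ell_j$ and the second disjunct applies.

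The delicate remaining case mixes strict and equal coordinates. There we would exploit the fact that both finite-$n$ allocations lie on the same continuous branch, so that the equal coordinates differ by an amount that vanishes in $n$. We would then read the definition of SP1 as allowing $\epsilon$-slack: the only profitable manipulations that survive are those whose gain, measured in the sup norm, is below $\epsilon$. If that reading is too weak, the fallback is to show that, pointwise in realizations, $i$'s optimal choice under truthful reporting weakly dominates the choice she would make as $j$ at the same prices. Combined with $O(1/n)$ price proximity and the continuous distribution of $\xi$ (Assumption \ref{assumption:continuous}), this would bound any gain by a vanishing probability mass.
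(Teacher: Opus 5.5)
\textbf{There is a genuine gap: the final step, from limiting weak dominance to the finite-$n$ disjunction, is never carried out.}

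Your Steps 1 and 2 track the paper's proof. Both rest on the fact that a single report perturbs the empirical fundamentals by a vanishing amount, so one continuous branch of the ($\ell$-)regular equilibrium map serves both profiles. Both conclude $\pi_{n,i}\to\mathbf{a}^*(i)$ and $\pi_{n,(i,j)}\to\mathbf{a}^*(j)$ for a single limiting equilibrium, and both split on whether these limits coincide. Your realization-by-realization argument that $b_i=b_j$ yields $\ell_i\succeq_i\ell_j$ is more direct than the paper's appeal to Lemma~\ref{lemma:strassen}, which only compares expected expenditures at $p^*$.

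Within that final step, you close only the two extreme configurations, and the all-equal one rests on a false inference. Equality of every cumulative sum $\sum_{y\succeq_i x}$ pins down only the mass on each of $i$'s indifference classes. If $x\sim_i y$, the degenerate lotteries on $x$ and on $y$ have identical sums yet are at $\|\cdot\|_\infty$-distance $1$. So $\ell_i=\ell_j$ does not follow, and the second disjunct need not apply. The generic mixed case is handled by rereading SP1 with $\epsilon$-slack, and the fallback only bounds the size of a gain. Both replace the property to be proved instead of proving it.

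The idea you are missing is the paper's: do not try to preserve every weak inequality, only one strict one. Suppose some cutoff $x$ has $\gamma:=\sum_{y\succeq_i x}\ell_{i,y}-\sum_{y\succeq_i x}\ell_{j,y}>0$. This is an open condition. On the event (of probability tending to one) that both finite-$n$ lotteries lie within $\delta$ of their limits with $2|X|\delta<\gamma$, the misreport's cumulative mass at $x$ stays strictly below the truthful one. Hence $\pi_{n,(i,j)}$ cannot stochastically dominate $\pi_{n,i}$.

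Two caveats apply.
\begin{itemize}
\item This delivers non-domination of the misreport. The paper then records it as $\pi_{n,i}\succeq_i\pi_{n,(i,j)}$, which does not follow for a partial order. So your diagnosis that the literal first disjunct is out of reach of a limit argument is right, and non-domination is the version that can actually be established.
\item Given $\ell_i\succeq_i\ell_j$, the one-cutoff argument covers everything except limits that are distinct but SD-equivalent for $i$. The paper sets this case aside by asserting that $\pi_i\neq\pi_j$ rules out $\pi_j\succeq_i\pi_i$, and with indifferences that assertion fails. Take $x\sim_i y$, let $j$ agree with $i$ except that $y\succ_j x$, and let $p^*_y>p^*_x$. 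Then both reach the class $\{x,y\}$ exactly when $x$ is affordable, but $j$ sometimes takes $y$. Here the vanishing price change caused by the report itself (less demand for $x$ lowers $p_x$) can raise the misreport's best-class probability above the truthful one. Meanwhile the two lotteries stay far apart in $\|\cdot\|_\infty$, so neither disjunct holds. This case cannot be closed by limits alone and needs an additional assumption.
\end{itemize}
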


Theorem \ref{theorem:sem} implies this result. An agent has zero price impact in the large market. Equivalently, she faces fixed prices. Misrepresenting herself as another agent that has the same budget (equal arrival time) cannot give her a strictly stochastically dominating allocation. Otherwise, she would purchase that agent's allocation. The replica market allocations converge to these allocations given $\ell$-regularity, and the Corollary follows.

\subsection*{4.2 \hspace{5pt} Comparison to Existing Mechanisms}

We discuss the relationship between SEM mechanisms and others in the literature. As far as we are aware, SEM is the only mechanism for online stochastic matching that satisfies ordinality and the greedy constraint. Our comparisons will therefore focus on (a) the solution to the static problem subject to the greedy and ordinality constraints (Serial Dictatorship with Indifferences), (b) solutions to the static problem with the ordinality constraint and without the greedy constraint (CERI, ACEEI, and EPS), and (c) briefly discuss solutions to the online problem subject to the ordinality constraint.

\textbf{Serial Dictatorship with Indifferences (SDI).} SDI mechanisms are extensions of Serial Dictatorships (SD) to allocate objects in static problems when agents have indifferences\footnote{See \qcite{highsmith2024matching} for further discussion.}. They are deterministic and Pareto efficient but not (equal-type) envy-free. Generally, randomized SD mechanisms are not ordinally efficient. Seeing as SDI mechanisms are equivalent to SD mechanisms on strict preference domains, it follows that randomized SDI mechanisms are not ordinally efficient.

Directly using SDI to allocate objects is infeasible; the matchmaker would need to wait until all arrivals are realized because SDI determines the match for a dictator using the preferences of later dictators. This information is necessary to efficiently allocate objects within an agent's indifference set. However, this violates the greedy allocation constraint. If the market designer had omniscient foreknowledge of the agent arrivals $\tilde{I}$, the designer could obtain a Pareto efficient deterministic allocation by running SDI then allocating to each agent the object that she would get under SDI when the agent arrives. The allocation would also be greedy (because earlier arrivals are earlier dictators).

Interestingly, in the case of one arrival per period, one can think of SEM as a SDI mechanism. The prices of SEM serve as a proxy for how SDI would break ties within indifference sets absent ominiscience. An arriving agent receives her favorite object with the lowest price. They are not synonymous with more than one arrival per period: SEM will allocate envy-free lotteries whereas SDI does not.

\textbf{Approximate Competitive Equilibrium from Equal Incomes (ACEEI)} and \textbf{Competitive Equilibrium from Random Incomes (CERI).} Both ACEEI (\qcitenp{budish2011combinatorial}) and CERI (\qcite{nguyen2025efficiency}) are mechanisms primarily intended for combinatorial allocation. ACEEI is approximately Pareto efficient (ex-post) and envy-free up to one good (ex-post). In comparison, CERI is ordinally efficient and envy-free (per our definition of envy-freeness), whereas ACEEI is not. Our solution to the large market problem renders SEM more akin to CERI for a static problem, in fact, we conjecture that the two are equivalent on the strict preference domain despite our usage of random prices versus CERI's random budgets. Yet, ACEEI and CERI differ from SEM in the crucial respect that they are defined only for this strict preference domain whereas we allow for the general preference domain with indifferences.

\textbf{Extended Probabilistic Serial (EPS).} \qcite{katta2006solution} develop this method to apply \qcite{bogomolnaia2001probabilistic}'s Probabilistic Serial (PS) to the general preference domain. An EPS mechanism solves a network flow problem that is similar to the original PS formulation of "eating speeds." They prove a First and Second Welfare Theorem for EPS mechanisms; that is, EPS mechanisms characterize the set of ordinally efficient allocations. If our conjecture that price equilibria are analogous to CERI in the general preference domain is correct, then it is likely that price equilibria also characterize the set of ordinally efficient allocations just as CERI do under the strict domain. Therefore, price equilibria and EPS mechanisms would be equivalent under unit demand. SEM is distinct for two reasons:

(i) Our flexibility in specifying the price error term implies that price equilibria can implement different tie-breaking behavior which has interesting implications for ordinal efficiency (Theorem \ref{theorem:ties}) and asymptotic convergence (Theorem \ref{theorem:sem}).

(ii) EPS is for static problems, and it is not clear if EPS \textit{can} be extended to online matching. A "dynamic" EPS (DEPS) must compute allocations not just for the static market but also for expected arrivals so that earlier arrivals receive their preferred objects over later arrivals. Additionally, DEPS would require a feasible iterative implementation as in Lemma \ref{lemma:continuation}. Last, extending this hypothesized DEPS to fully online matching as in Theorem \ref{theorem:fully-online} could be significantly more difficult if not impossible. We view proving or contradicting these equivalence results as interesting open problems.

\textbf{Online Matching with Ordinality.} \qcite{hoefer2017combinatorial} represents the modelling of online matching with ordinality. They assume underlying cardinal valuations, but the market designer only has access to an ordinal ranking consistent with the cardinal values. The goal is to identify algorithms that "lose little" compared to knowing the cardinal values. Our approach is fundamentally different as our welfare criteria is stochastic dominance, not performance benchmarks against cardinal loss.

\section*{5 \hspace{5pt} Extensions}

In this section, we explore an extension to our model to allow for stochastic arrivals of objects.

\subsection*{5.1 \hspace{5pt} Fully Online Matching}

A \textit{fully online matching problem} refers to stochastic arrival of supply and demand. We focus our analysis on the description of this problem for the large market. The implementation for the online problem is likely to follow the same steps as our main model, though we later note a few subtleties that must be handled to extend our results.

A large market in the fully online matching problem is $\mathcal{M} = (X, S, I, T, F)$:
\begin{enumerate}
    \item \textbf{Objects} ($X$), \textbf{Agents} ($I$), and \textbf{Arrivals} ($T,F$) remain as before.
    \item \textbf{Supply} $S = (S^t)_{t \in [T]}$ is a sequence of random variables $S^t \in \mathbb{R}^{|X|}$ with joint CDFs $(H^t(\cdot))_{t \in [T]}$ such that $S^t \sim H^t$.
\end{enumerate}

A \textit{time tax} price vector is $\mathbf{p} = (\mathbf{p}^t)_{t \in [T]}$ where $\mathbf{p}^t \in \mathbb{R}^{|X|}$ and $\mathbf{p}^t = p^t + \xi$. We maintain Assumptions \ref{assumption:continuous} through \ref{assumption:bounded}. An agent $i$ with arrival time $t_i$ faces time-specific prices:
\[D_i(\mathbf{p}) = \min_{\mathbf{p} \cdot a_i} \bigg\{ \max_{\succeq_i} \Big\{ a_i \in X : \mathbf{p}^{t_i} \cdot a_i \leq 1 \Big\} \bigg\} \]
All else remains as in the general model. We denote period $t$ through $k$ expected supply as:
\[\mathcal{S}^{t,k} = \sum_{n = t}^k E[S^n]\]
We can now state our equilibrium concept and result.
\begin{definition}
    A time tax price vector $p^* = (p^{*,t})$ is a $(t)$-Lindahl equilibrium if:
    \begin{enumerate}
        \item $\mathcal{D}^{t,k}(p^*) \leq \mathcal{S}^{t,k}$ for all $k \in [T]$ such that $k \geq t$.
        \item $p^{*,k}_x > 0$ implies $\mathcal{D}^{t,k}_x(p^*) = \mathcal{S}^{t,k}_x$ for all $x \in X$ and $k \in [T]$ such that $k \geq t$.
    \end{enumerate}
\end{definition}
As before, we will refer to a $(1)$-Lindahl equilibrium as simply a Lindahl equilibrium.

\begin{theorem}\label{theorem:fully-online}
    Under Assumptions \ref{assumption:continuous} and \ref{assumption:budgets}, a $(t)$-Lindahl equilibrium exists for every $t \in [T]$. In addition: under Assumption \ref{assumption:bounded}, every Lindahl equilibrium allocation is greedy.
\end{theorem}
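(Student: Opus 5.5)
The plan is to reduce Theorem \ref{theorem:fully-online} to the argument behind Theorem \ref{theorem:existence}, treating the time-tax vector $p=(p^t)_{t\in[T]}$ as one price vector in $\mathbb{R}_+^{|X|\cdot T}$. Each date-$k$ price $p^k$ is paired with the cumulative market-clearing conditions for $\mathcal{D}^{t,k}$ against $\mathcal{S}^{t,k}$.

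\textbf{Demand properties.} First I would check that each agent's demand $D_i(p^{t_i}+\xi)$ depends only on her own date's price block. It is nonempty because the null option $o$ is abundant and, by Assumption \ref{assumption:budgets}, affordable for almost every shock once prices are capped. It is also upper hemicontinuous at every price realization except those where some $\mathbf{p}^{t_i}_x$ equals the budget exactly. By Assumption \ref{assumption:continuous} that exceptional set has $G$-measure zero. Lottery demand $L_i$ is therefore a nonempty, convex, compact-valued, upper hemicontinuous correspondence: convexity comes from Lyapunov's theorem for the Aumann integral over a nonatomic measure, and upper hemicontinuity comes from a Fatou-type lemma for Aumann integrals with a uniformly bounded integrand. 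Summing with weights $f^{t,k}(i)$ preserves these properties for each cumulative demand $\mathcal{D}^{t,k}$.

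\textbf{Fixed point.} Next I would restrict prices to a compact box $[0,\bar P]^{|X|T}$, where $\bar P>\bar B+|\underline{\xi}|$, so that at price $\bar P$ an object is unaffordable to everyone. I would then define the Kakutani map $\Phi(p,d)=(\phi(d),\mathcal{D}(p))$. Here $d=(d^{t,k})_k$ ranges over the compact image of cumulative demands, and $\phi$ adjusts prices blockwise:
\[
p^k_x\mapsto \min\bigl\{\bar P,\max\{0,\,p^k_x+\mathcal{D}^{t,k}_x-\mathcal{S}^{t,k}_x\}\bigr\}.
\]
At a fixed point, the usual Walras-type argument gives the two conditions. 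If cumulative excess demand at $k$ were positive, the price would sit at $\bar P$ and demand for $x$ at date $k$ would vanish.

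This is where I expect the main obstacle, because the constraints are cumulative and coupled across dates. Raising $p^k_x$ lowers only date-$k$ purchases, yet $\mathcal{D}^{t,k}$ includes dates $t,\dots,k-1$, which are fixed by the earlier price blocks. To handle this, I would change variables from cumulative to per-date excess demand. I would set $e^k=\mathcal{D}^{t,k}-\mathcal{S}^{t,k}$ and argue by induction on $k$. Given that cumulative clearing holds through $k-1$ with complementary slackness, the date-$k$ adjustment controls exactly the increment $\mathcal{D}^{k,k}-E[S^k]$ plus slack carried over from earlier dates. The fixed point then yields (1) and (2) date by date.

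If the simultaneous fixed point proves awkward, I would fall back on sequential construction. I would solve date $t$ by Theorem \ref{theorem:existence} with supply $\mathcal{S}^{t,t}$, then date $t+1$ with the leftover supply plus $E[S^{t+1}]$, and so on. At each step I would apply the static existence result with the residual supply.

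\textbf{Greediness.} Under Assumption \ref{assumption:bounded}, I would copy the greedy half of Theorem \ref{theorem:existence}. Suppose $a_i$ puts positive mass on $x$ while $x^*\succ_i x$. Then on a positive-measure set of shocks $x^*$ is unaffordable, $\mathbf{p}^{t_i}_{x^*}>b_i$. Since $\bar\xi<\underline B$, this forces $p^{t_i}_{x^*}>0$. Complementary slackness at $k=t_i$ then gives $\mathcal{D}^{t,t_i}_{x^*}=\mathcal{S}^{t,t_i}_{x^*}$, so the expected supply of $x^*$ available through $t_i$ is exhausted by weakly earlier arrivals. That is the greedy condition with supply read as cumulative expected supply.
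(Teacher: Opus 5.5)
Your proposal is correct and follows essentially the paper's own route. Like the paper, you apply Kakutani to a blockwise price-adjustment map driven by cumulative excess demand $\mathcal{D}^{t,k}-\mathcal{S}^{t,k}$. You then induct on $k$, using that $p^k$ moves only date-$k$ demand, which vanishes at the price cap, so the cap cannot bind while a cumulative constraint is violated. Your greedy argument, via $\bar\xi<\underline B$ and complementary slackness at $k=t_i$, is identical to the paper's. Your clipped map and per-date bookkeeping of selections are, if anything, tidier than the paper's $Z^k$.

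One line needs repair: nonemptiness of $D_i$ does not follow from capping prices. A price at the cap makes that object, possibly including $o$, unaffordable. Nonemptiness instead comes from keeping the null option affordable. You can do this by fixing $p_o=0$ together with $\bar\xi<\underline B$, or by adopting the convention that an agent with an empty budget set consumes $o$.
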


This Theorem applies in the degenerate case where $S$ is not stochastic and identical across $t$, implying that Lindahl equilibria generalize price equilibria. The necessity to allow for a "time tax" arises because a single, fixed price vector cannot always satisfy multiple market clearing constraints. A simple case when this ensues is a situation with identical preferences and arrival of supply across time periods, but arrival of demand doubles in period $2$ from period $1$. Any fixed price vector that satisifes market clearing in period $1$ must necessarily induce twice the demand of period $1$ in period $2$, meaning the expected supply will be exceeded by a factor of one-half.

Theorem \ref{theorem:fully-online} establishes the possibility of using a large market approach for fully online matching. Interestingly, there is also little worry about using "personalized" prices in this setting, as an agent's arrival time is verifiable so that a manipulation cannot award her more favorable prices. We add a caveat: the application must fit \textit{immediate departure}. A $(t)$-Lindahl equilibrium ignores agent demand from periods $n < t$. This formulation implicitly assumes that agent departure is immediate. That is, arrivals cannot wait for a future time period to match. Immediate departure is the textbook case in the literature on fully online matching, but its appropriateness varies across applications. 

\section*{6 \hspace{5pt} Simulations}

In this section, we describe our empirical application and present simulation results.

\subsection*{6.1 \hspace{5pt} Institutional Context}

In the United States, the foster care system is responsible for providing care to children found victim of abuse or neglect at the hands of their legal guardians. Each state administers its own foster care program, and, in many cases, local counties govern quasi-independent programs. Upon discovery of substantiated abuse or neglect, the local authority (in coordination with courts) removes the child from the custody of their legal guardians. Subsequently, the local authority assigns a caseworker to represent the child's best interests, and the caseworker identifies a temporary \textit{foster home} to care for the chil. Nationwide, children that enter foster care are at higher risk of negative emotional, behavioral, and biological outcomes (\qcitenp{leve2012practitioner}). In 2018, the Family First Prevention Act was established with one goal of reducing the number of children entering foster care by providing services to at-risk families (\qcitenp{font2020foster}). 

Our partner, the anonymous Firm, is a U.S-based nonprofit whose mission is to deter children from entering foster care. The Firm is organized into local chapters that operate on a state-by-state or county-by-county basis. In the locale specific to our work, The Firm provides temporary hosting services in collaboration with the local authorities. Parents that are in need of social assistance---for example, a mom that needs a caregiver for her children while she attempts to find employment---can contact the Firm to request to have their children hosted in a \textit{host home} that volunteers for the Firm. In addition, the Firm receives referrals directly from the local authorities that involve abuse or neglect cases that are deemed to have low risk of immediate harm to the child. The Firm subsequently identifies one of its own host homes to care for the child and coordinates with the local authorities to manage the child's social services. The Firm's caseworkers vet and approve host homes, process intake of children, and use a centralized system to attempt to match children to host homes. 

The Firm aims to minimize the time that children spend in hostings and provide high-quality assistance to parents in order to achieve quick reunification. The Firm and local authority are aligned in a desire to safely reunify children with biological parents, and the Firm provides economic value in both alleviating need for a state-approved foster home as well as potentially deterring entrance into foster care---which is costly (\qcitenp{barth2006comparison})---altogether. 

We study the matching mechanism used by the Firm to allocate host homes to children. Their context fits our key constraints. The agents, children, arrive stochastically in patterns that often depend on time-based trends (school year, holidays, etc). The host homes are known apriori within planning horizons that span weeks to months. The Firm must match children to host homes immediately upon intake due to legal and logistical constraints. Finally, the Firm implicitly imposes ordinal preferences onto children.

In particular, the Firm is indifferent to placing children in any host home subject to that home's ability to provide adequate care for the child. Moreover, they elicit dichotomous preferences from host homes. This is equivalent to redefining children's preferences so that a child either prefers a home to the outside object or does not, the child is indifferent between any homes preferred to the outside option, and a home is acceptable if and only if it is acceptable to the child and the child is acceptable to the home. Consequently, one can treat homes as objects without preferences\footnote{It is immediate that a matching is ordinally efficient and individually rational if and only if it is ordinally efficient with respect to the problem redefined so that only children have preferences.}. Informal conversations with volunteers in the Firm indicate that home preferences vary substantially over the age of children. A home finds any child below a certain age to be acceptable, and younger children tend to be acceptable more often than older children. This motivates some choices we describe in our simulations. 

The Firm's current mechanism is an informal process where, upon intake of a child, every host home receives a notification describing the child's characteristics and the reason for intake. Host homes then respond if they are willing to host the child, and the caseworker assigns the child to the first willing home. Alternatively, if the caseworker anticipates a need for a specific match fit, the worker may circumvent this process and directly contact a home to request hosting. Even assuming that host homes are perfectly altruistic (desiring to maximize the number of matches) and perfectly informed on future arrivals, this mechanism can fail to produce ordinally efficient allocations due to strategic coordination failures, motivating a case for applying SEM to improve the matching process.

\subsection*{6.2 \hspace{5pt} Market Primitives}

Comparing SEM to the current process requires observational data on realized matches and other model primitives. Currently, we do not have access to the requisite data. In lieu of ongoing work with the Firm, we use a simplified procedure to model their existing matching process and compare it to SEM on a basic two-by-two example. 

We represent the current process using \textit{Serial Dictatorship with Random Tie-Breaking (SD-RTB)}. In SD-RTB, we randomly permute the order of children upon their arrival and sequentially, randomly allocate one of their most preferred homes to each child. This captures the essence of the current process: host homes respond in an arbitrary order, and the caseworker allocates to the first willing home. Clearly, this does not capture other important aspects of the current process, such as caseworker discretion and strategic behavior by host homes. Nevertheless, this simplified model provides a useful benchmark to compare against SEM.

We simulate a market with two host homes ($a,b$) and two types of children ($c_1, c_2$) over $T = 4$ periods. Type 1 is \textit{selective}: $a \succ_{1} o \succ_1 b$. That is, $b$ is unacceptable. This may be interpreted as an older child that fewer homes find acceptable. In contrast, type 2 is non-selective: $\{a,b\} \succ_2 o$, that is, either $a$ or $b$ is acceptable. Each period, $n$ children arrives according to the distribution: $F(c_1) = 0.5$ and $F(c_2) = 0.5$, where we vary $n = \{1,5,10,25\}$ in our simulations. We simulate twenty-five markets for each $n$. Each home has unit supply. Budgets and price error terms are set to satisfy Assumptions \ref{assumption:continuous} through \ref{assumption:perfect-correlation}. We use a customized algorithm tuned for computational speed to implement SEM.

\begin{table}[t]
    \centering
    \caption{Comparing SEM to SD-RTB}
    \begin{tabular} {l@{\hskip 7ex}c@{\hskip 7ex}c@{\hskip 7ex}c@{\hskip 7ex}c}
    \hline
    \hline
    \textit{Mechanism} & $n = 1$ & $n = 5$ & $n = 10$ & $n = 25$ \\ \hline
    \textit{Placement Rates} & & & & \\
    SEM & 0.870 & 0.948 & 0.949 & 0.924 \\
    & (0.179) & (0.060) & (0.064) & (0.084) \\
    & & & & \\
    SD-RTB & 0.760 & 0.860 & 0.852 & 0.820 \\
    & (0.169) & (0.074) & (0.063) & (0.029) \\ \hline
    \end{tabular}
    
    \label{table:sem-sd}
\end{table}

\subsection*{6.3 \hspace{5pt} Results}

SEM is asymptotically efficient; its lotteries will be ordinally efficient as $n$ grows to infinity. However, for small $n$, SEM may perform poorly. We analyze the performance of SEM versus SD-RTB in finite markets of various sizes to evaluate this. Given dichotomous preferences, our main metric is the placement rate, that is, the percentage of children who are matched to a home. Our preliminary results are encouraging in this basic environment.

On average, SEM matches a higher percentage of children than SD-RTB across all market sizes (Table \ref{table:sem-sd}). The performance gap is most pronounced when $n = 1$, where SEM matches 11\% more children on average. The gap remains steady across market sizes at approximately 10\%. The standard deviations also fall sharply at any $n > 1$, confirming the general intuition that larger markets result in lower aggregate uncertainty. A similar analysis looking at the density of placement rates across market sizes shows that SEM shifts mass to the right relative to SD-RTB (Figure \ref{figure:sem-sd-density}).

\begin{figure}[t]
    \centering
    \includegraphics[width=0.7\textwidth]{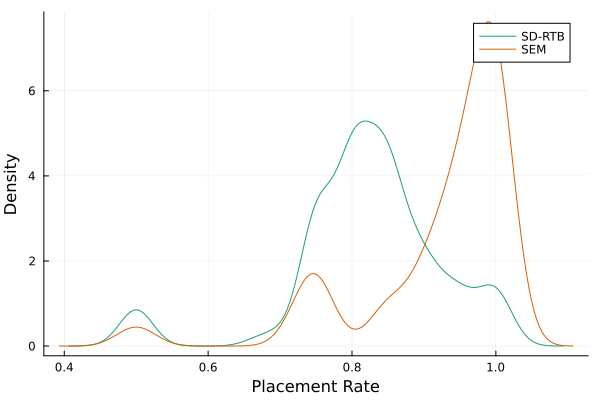}
    \caption{Density of Placement Rates}
    \label{figure:sem-sd-density}
\end{figure}

This result suggests that SEM can improve placement rates even in small markets. Intuitively, SEM's reliance on large market equilibria allows it to better internalize future arrivals and allocate objects more efficiently even with a substantial amount of aggregate uncertainty. In contrast, SD-RTB's myopic allocation can lead to suboptimal matches, especially when selective children arrive early and consume scarce resources.

\subsection*{7 \hspace{5pt} Conclusion}

In this work, we study an online matching problem where the market designer is constrained to match agents immediately upon arrival and agents have ordinal preferences that may include indifferences. We develop a novel mechanism, the Sequential Equilibrium Mechanism (SEM), that leverages large market equilibria to achieve desirable properties including asymptotic efficiency, strategyproofness, and equal-type envy-freeness. Our analysis extends to fully online matching with stochastic arrivals of both supply and demand. Preliminary simulations in a stylized setting inspired by a real-world application demonstrate that SEM can outperform existing matching processes even in small markets. 

Future work includes collaborating with our partner organization to implement SEM in practice and tightening market conditions that guarantee ordinal efficiency under the random tie-breaking schema presented in Theorem \ref{theorem:ties}. Moreover, there are open questions for equivalence results between price equilibria, random budget equilibria, and EPS: are they equivalent on the general preference domain with unit demand? Moreover, is it possible to extend existing ordinal mechanisms such as EPS or SDI to online matching? Might it be that any ordinally efficient mechanism has a large market counterpart that is asymptotically efficient in online matching? We view these as interesting questions with meaningful implications for this new problem.

%% file: Paper/Appendix.tex
\section{Proof Appendix}

Before beginning, note that we will say that a property $Z$ holds $r$-ae (almost everywhere) under measure $\mu$ for a mathematical object $O_r$ if $Z$ is true for $O_r$ for every $r \in \mathbb{R} \setminus S$ for a subset $S \subset \mathbb{R}$ such that $\mu(S) = 0$.

\begin{center}
    \textbf{Proof of Theorem \ref{theorem:existence}}.
\end{center}

\textbf{Claim 1:} $D_i(\mathbf{p})$ is non-empty and upper hemicontinuous at almost every $\mathbf{p}$. Define:
\[J_i = \{\mathbf{p} : \mathbf{p} \cdot a_i = b_i \text{ for some } a_i \in X\}\]
Since $D_i(\mathbf{p})$ is compact-valued, it is upper hemicontinuous at $p$ if it has a closed graph at $p$. Let $(\mathbf{p}^k)$ be a sequence of prices converging to $p$. $D_i(\mathbf{p})$ has a closed graph if, for all such sequences, whenever $a_i^k \in D_i(\mathbf{p}^k)$ for all $k$ with $a_i^k \rightarrow a_i$, we have:
\[a_i \in D_i(\mathbf{p})\]
Let $(\mathbf{p}^k)$ be a sequence of prices converging to some $\mathbf{p}^0 \notin J_i$. The demand for $i$ at each $\mathbf{p}^k$ is $a_i^{k} \in D_i(\mathbf{p}^k)$ with limit $a_i^{0}$. Toward a contradiction, suppose that $a_i^{0} \notin D_i(\mathbf{p}^0)$. 

Note that $a_i^{0}$ must be affordable at $\mathbf{p}^0$, otherwise $\mathbf{p}^0 \cdot a_i^{0} > b_i$. However, since $a_i^{0} \in \{0,1\}^{|X|}$ and $a_i^k \in \{0,1\}^{|X|}$ for all $k$, this implies that there exists some $K$ such that for all $j > K$, $a_i^{j} = a_i^{0}$. Then:
\[\mathbf{p}^j \cdot a_i^j = \mathbf{p}^j \cdot a_i^0\]
but
\[\mathbf{p}^m \cdot a_i^0 > b_i \text{ for large enough } m\]
so that we can take $j^* = \max\{m,j\}$, then:
\[\mathbf{p}^{j^*} \cdot a_i^{j^*} > b_i\]
contradicting $a_i^{j^*} \in D_i(\mathbf{p}^{j^*})$. So, we continue with two cases:
\begin{enumerate}
    \item There exists some $a_i \in \mathcal{A}_i$ with $\mathbf{p}^0 \cdot a_i \leq b_i$ and $a_i \succ_i a_i^{0}$.
    \item There exists some $a_i \in \mathcal{A}_i$ with $a_i \succeq_i a_i^{0}$ and $\mathbf{p}^0 \cdot a_i < \mathbf{p}^0 \cdot a_i^{0}$.
\end{enumerate}

In the first case, $\mathbf{p}^0 \cdot a_i < b_i$ since $a_i \in D_i(\mathbf{p}^0)$ and $\mathbf{p}^0 \notin J_i$. Since $\mathbf{p}^k \rightarrow \mathbf{p}^0$, there exists some $K$ such that for all $j > K$, $\mathbf{p}^j \cdot a_i < b_i$. Therefore, for all $j > K$, $a_i$ is affordable at price $\mathbf{p}^j$ and $a_i \succ_i a_i^{0} \sim_i a_i^{j}$ for large enough $j$. This contradicts that $a_i^{j} \in D_i(\mathbf{p}^j)$.

Similarly, in the second case, since $\mathbf{p}^k \rightarrow \mathbf{p}^0$, there exists some $K$ such that for all $j > K$, $\mathbf{p}^j \cdot a_i < \mathbf{p}^j \cdot a_i^{0}$. Further, for large enough $j > K$, $\mathbf{p}^j \cdot a_i^{0} = \mathbf{p}^j \cdot a_i^{j}$ and $a_i \succeq_i a_i^{0} \sim_i a_i^{j}$. This contradicts that $a_i^{j} \in D_i(\mathbf{p}^j)$.

Let $J_{i,x} = \{\mathbf{p} : \mathbf{p}_x = b_i\}$ which is singleton for each $x \in X$. (In the case of combinatorial demand, it is an affine space in $|X| - 1$, and results are unchanged.) Then:
\[J_i = \cup_{x \in X} J_{i,x}\]
Under Assumption \ref{assumption:continuous}:
\[\Pr(\mathbf{p}_x = b_i) = \Pr(\xi_x = b_i - p_x) = 0 \implies \Pr(\mathbf{p} \in J_{i,x}) = 0\]
Moreover:
\[\Pr(\mathbf{p} \in J_i) \leq \sum_{x \in X} \Pr(\mathbf{p} \in J_{i,x}) = 0 \]
where the final inequality follows because the number of events is finite. This implies that $J_i$ has measure zero. Hence, $D_i(\mathbf{p})$ is an upper hemicontinuous correspondence at almost every $\mathbf{p}$ for every $i$. Equivalently, $D_i(p + \xi)$ is upper hemicontinuous for all $p$ for $\xi$-a.e.

\textbf{Claim 2:} $L_i(p)$ and $\mathcal{D}^t(p)$ are non-empty, convex, and upper hemicontinuous correspondences in $p$. We use the following results\footnote{Each result was originally proven in \qcite{aumann1965integrals}, but his case restricts to the Lebesgue measure. We use the fully generalized versions for any non-atomic measure.} which require some definitions:

Consider any correspondence $H : I \rightrightarrows \mathbb{R}^m$ that is non-empty and closed for any $i \in I$. $H$ is \textit{integrably bounded} if there exists a single-valued, Lebesgue integrable function $h : I \rightarrow \mathbb{R}^m$ such that, for each $i \in I$ and $a \in H(i)$, $|a| \leq h(i)$. We say that $g(i)$ is a $\mu$-measurable selection of $H(i)$ if $g(i) \in H(i)$ for all $i \in I$ except for some set $I' \subset I$ such that $\mu(I') = 0$. The \qcite{aumann1965integrals} integral of $H$ with respect to an atomless probability measure $\mu$ is:
\[\int H := \bigg\{ \int_{I} g(i) d\mu(i) : g(i) \text{ is a $\mu$-measurable selection of } H(i) \bigg\}\]
\qcite{aubin-2009}. \textit{Theorem 8.1.3.} If $H$ is a measurable set-valued correspondence and is closed and non-empty for each $i \in I$, then a measurable selection of $H$ exists. This implies that a measurable selection of $D_i(\cdot)$ exists. Therefore, $L_i(p)$ is non-empty for each $p$.

\qcite{aubin-2009}. \textit{Theorem 8.6.3.} If $H$ is Borel-measurable and integrably bounded, then $\int H$ is convex-valued and compact. $D_i(p)$ is non-empty and compact-valued for each $i \in I$ and $p \in \mathbb{R}^{|X|}$. $D_i(p)$ is immediately Borel-measurable for each $i \in I$ and $p \in \mathbb{R}^{|X|}$. Any bounded, measurable function on a probability space is also integrably bounded. Therefore, $L_i(p;\xi)$ is convex-valued and compact-valued for each $p$.

Finally, upper hemicontinuity of $L_i(p)$ at each $p$ follows by the set-valued Lebesgue Dominated Convergence Theorem (\qcite{aubin-2009}. \textit{Theorem 8.6.7}) because $D_i(p + \xi)$ is upper hemicontinuous ($p + \xi$)-ae for each $p$.

Non-emptiness, convex values, and upper hemicontinuity are preserved under product with constants and summation, implying the corresponding properties for aggregate demand. With this claim established, we continue using the standard fixed-point argument.  

\textbf{Claim 3:} An equilibrium exists. There exists a price vector $\bar{p}$ with $p_x \leq K$ for each $x \in X$ such that $\mathcal{D}^t(\bar{p}) = \{(0)\}_{x \in X}$ by assumptions \ref{assumption:continuous} and \ref{assumption:budgets} (we can consider a large enough $K$ such that $K > \bar{B} - \underline{\xi}$). 

Now, we can construct the \textit{price adjustment function} (using the Minkowski sum):

\[Z(p) = \max\{p, 0\} + \bigg( \mathcal{D}^t(p) - S \bigg)\]

and define it on the domain and range such that, for all price vectors, $0 \leq p_x \leq K$ for each $x \in X$. This is a non-empty, compact, and convex set in $\mathbb{R}^{|X|}_+$, and aggregate demand is non-empty, convex, and upper hemicontinuous for all $p$. By Kakutani's Fixed Point Theorem, there exists a fixed point $p^*$ such that $Z(p^*) = p^*$. At this fixed point, we have that, for each $x \in X$:
\[p_x^* = \max\{ p_x^*, 0\} + \bigg( \mathcal{D}^t_x(p^*) - s_x \bigg)\]
Therefore, if $p^*_x > 0$, then $\mathcal{D}^t_x(p^*) = s_x$. Moreover, by construction, if $p^*_x = 0$, then $\mathcal{D}^t_x(p^*) - s_x \leq 0$ so that $\mathcal{D}^t_x(p^*) \leq s_x$. (Otherwise, this could not be a fixed point.) Therefore, $p^*$ is a price equilibrium. 

\textbf{Claim 4:} The equilibrium is greedy for some budgets $b$. 

We can write the equilibrium allocation as $\mathbf{a}^* : I \rightarrow \{0,1\}^{|X|}$ such that $\mathbf{a}^*(i) = a_i^*$ for some $F$-measurable selection $a_i^* \in L_i(p^*)$. Note then that, $i$-ae, $\mathbf{a}^*(i) \in L_i(p^*)$. Suppose that $x^* \succ_i x$ for some $x \in X$ satisfying $a_{i,x}^* > 0$. $a_{i,x}^* > 0$ implies that, for some realization $\mathbf{p}^*$:
\[\mathbf{p}^*_{x^*} > b_i \implies p^*_{x^*} > b_i - \bar{\xi}\]
otherwise, $i$ would consume $x^*$ instead of $x$. Then, we may set $b_j < b_i - \bar{\xi} + \underline{\xi}$ for all $j$ such that $t_j > t_i$. This implies that that $D_j(\mathbf{p}^*)_{x^*} = 0$ for all realizations $\mathbf{p}^*$, and therefore $a_{j,x^*}^* = 0$ for all $j$ such that $t_j > t_i$. Moreover, $\underline{B} > 0$ implies that $p^*_{x^*} > 0$ (Assumptions \ref{assumption:budgets} and \ref{assumption:bounded}), so that market clearing implies all of $x^*$ is allocated to $i'$ such that $t_{i'} \geq t_i$. This proves the Theorem. $\blacksquare$

We require two lemmas before proving Theorem \ref{theorem:fwt}.

\begin{lemma}\label{lemma:non-wasteful}
    Suppose that $p^*$ is a price equilibrium. Then if $x \succ_i D_i(\mathbf{p}^*)$ for some realization $\mathbf{p}^*$, then $p_x > 0$.
\end{lemma}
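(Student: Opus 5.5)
We argue by contrapositive: suppose $p^*_x = 0$, and show that $x$ is affordable for $i$ at every realization $\mathbf{p}^*$. Then no demanded bundle can be strictly worse than $x$, so $x \succ_i D_i(\mathbf{p}^*)$ is impossible. The lemma is meant to be used under Assumption \ref{assumption:bounded}, together with the shock bound from Assumption \ref{assumption:continuous}. We invoke them here as the paper does when discussing non-wastefulness after Theorem \ref{theorem:fwt}.

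\textbf{Step 1 (affordability of free objects).} If $p^*_x = 0$, then at any realization $\mathbf{p}^*_x = \xi_x \le \bar{\xi}$. Assumption \ref{assumption:bounded} gives $\bar{\xi} < \underline{B}$, and Assumption \ref{assumption:budgets} gives $\underline{B} \le b_i$. Hence $\mathbf{p}^*_x < b_i$, so $x$ lies in $i$'s budget set $\{a_i \in X : \mathbf{p}^* \cdot a_i \le b_i\}$ at every realization.

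\textbf{Step 2 (demand is maximal in the budget set).} By definition, $D_i(\mathbf{p}^*)$ is obtained by first taking the $\succeq_i$-maximal elements of the budget set and then the cheapest among them. Every $y \in D_i(\mathbf{p}^*)$ therefore satisfies $y \succeq_i z$ for all affordable $z$; in particular $y \succeq_i x$ by Step 1. This contradicts $x \succ_i y$ for elements $y$ of $D_i(\mathbf{p}^*)$. The contradiction shows that $x \succ_i D_i(\mathbf{p}^*)$ forces $p^*_x \neq 0$, and since equilibrium prices lie in $\mathbb{R}_+$ (Claim 3 of the proof of Theorem \ref{theorem:existence}), we get $p^*_x > 0$.

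\textbf{Interpretation and main obstacle.} We read ``$x \succ_i D_i(\mathbf{p}^*)$'' as $x \succ_i y$ for the (any) $y \in D_i(\mathbf{p}^*)$. Since all elements of the demand are $\sim_i$-indifferent, the quantifier choice is immaterial. The delicate point is the support of the shock. If the strict inequality in Assumption \ref{assumption:bounded} were weakened, the realization $\xi_x = b_i$ would become possible, but it has probability zero by continuity (Assumption \ref{assumption:continuous}). In that case I would restate the lemma for almost every realization, as in Claim 1 of the proof of Theorem \ref{theorem:existence}. With the assumption as stated, no measure-zero caveat is needed.
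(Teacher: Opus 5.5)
Your proof is correct and is essentially the paper's argument: the paper observes that $x \succ_i D_i(\mathbf{p}^*)$ forces $\mathbf{p}^*_x > b_i$, which cannot happen when $p^*_x = 0$ because $\xi_x \le \bar{\xi} < \min_i b_i$; you run the same reasoning as a contrapositive and make explicit the reliance on Assumption \ref{assumption:bounded} and on $p^* \in \mathbb{R}_+^{|X|}$. Your closing aside is unnecessary: if Assumption \ref{assumption:bounded} were weakened to $\bar{\xi} \le \underline{B}$, the budget constraint is still weak, so a realization $\mathbf{p}^*_x = b_i$ leaves $x$ affordable and no almost-everywhere caveat is needed.
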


\textbf{Proof of Lemma \ref{lemma:non-wasteful}.} $x \succ_i D_i(\mathbf{p}^*)$ implies that $p_x^* + \xi > b_i$, which, when $p^*_x = 0$, is only possible if $\bar{\xi} > b_i$. Since $\bar{\xi} < \min_i b_i$, this implies that $p_x^* > 0$. $\blacksquare$

\begin{lemma}\label{lemma:strassen}
    Suppose that $\mathbf{a}^*$ is a price equilibrium allocation under $p^*$. Then:
    \begin{enumerate}
        \item If Assumption \ref{assumption:perfect-correlation} holds and $a_i$ stochastically dominates $\mathbf{a}^*(i)$ (strictly), then $p^* \cdot a_i \geq p^* \cdot \mathbf{a}^*(i)$ ($p^* \cdot a_i > p^* \cdot \mathbf{a}^*(i)$).
        \item If Assumption \ref{assumption:strong-correlation} holds, $|p_x^* - p_y^*| > 2z$ for all $x,y \in X$, and $a_i$ stochastically dominates $\mathbf{a}^*(i)$ (strictly), then $p^* \cdot a_i \geq p^* \cdot \mathbf{a}^*(i)$ ($p^* \cdot a_i > p^* \cdot \mathbf{a}^*(i)$).
    \end{enumerate}
\end{lemma}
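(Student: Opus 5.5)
The plan is a ``budget-set monotonicity plus Abel summation'' argument. The equilibrium lottery $\mathbf{a}^*(i)\in L_i(p^*)$ is an average, over realizations of the shock, of deterministic choices from $D_i(\mathbf{p}^*)$. I will show that every object $i$ consumes with positive probability is the cheapest member of its indifference class. I will also show that every strictly preferred class is strictly more expensive at $p^*$. Together these say that $p^*$ is increasing along $i$'s preference order on the relevant region, which is what converts first-order stochastic dominance into a price inequality.

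\textbf{Part 1 (NTB).} Under Assumption \ref{assumption:perfect-correlation} the realization is $\mathbf{p}^*=p^*+c\mathbf{1}$. Price comparisons between goods are therefore the same at every realization, and the affordable set $\{x: p^*_x+c\le b_i\}$ shrinks as $c$ grows. Partition $X$ into $i$'s indifference classes $C_1\succ_i C_2\succ_i\cdots$ and let $q_k=\min_{x\in C_k}p^*_x$.

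First, at any realization where $i$ consumes from $C_k$, the chosen object has price $q_k$. The reason is that if any member of $C_k$ is affordable, so is the cheapest one, and demand selects the cheapest most-preferred object. Hence $p^*\cdot\mathbf{a}^*(i)=\sum_k \mathbf{a}^*(i)(C_k)\,q_k$.

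Second, suppose $C_m$ receives positive mass and $C_l\succ_i C_m$. At a realization where $i$ picks from $C_m$, every member of $C_l$ is unaffordable. This gives $q_l+c>b_i\ge q_m+c$, so $q_l>q_m$.

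Now define $r_k=\min_{l\le k}q_l$, the minimum over classes weakly preferred to $C_k$. This sequence is weakly decreasing in $k$, i.e.\ weakly increasing as we move up the preference order. By the second step, $r_m=q_m$ on every supported class. Since every object in $C_k$ costs at least $q_k\ge r_k$, we get $p^*\cdot a_i\ge\sum_k a_i(C_k)r_k$.

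Let $A_k$ and $A^*_k$ be the upper cumulative masses $a_i(C_1\cup\dots\cup C_k)$ and $\mathbf{a}^*(i)(C_1\cup\dots\cup C_k)$. Stochastic dominance gives $A_k\ge A^*_k$. In particular $a_i$ puts no mass below the worst class supported by $\mathbf{a}^*(i)$, because $A^*$ already equals $1$ there. Abel summation then gives
\[
\sum_k a_i(C_k)r_k-\sum_k \mathbf{a}^*(i)(C_k)r_k=\sum_k (A_k-A^*_k)(r_k-r_{k+1})\ge 0,
\]
and the left side of the desired inequality is at least this quantity. This yields $p^*\cdot a_i\ge p^*\cdot\mathbf{a}^*(i)$.

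For the strict version, strict dominance gives some $k$ with $A_k>A^*_k$. Such a $k$ necessarily lies strictly above the worst supported class $C_{m^*}$, where both cumulatives equal $1$. Every class $C_l$ with $l\le k<m^*$ is strictly preferred to $C_{m^*}$, so the second step gives $q_l>q_{m^*}$. Taking the minimum over $l\le k$ gives $r_k>r_{m^*}=q_{m^*}$. Telescoping, $r_k-r_{m^*}=\sum_{k\le j<m^*}(r_j-r_{j+1})>0$, so some $j$ in $[k,m^*)$ has $r_j-r_{j+1}>0$. I expect $A_j>A^*_j$ at that same $j$, but only $A_k>A^*_k$ is guaranteed, so this needs an extra argument. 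One option is to pick $k$ extremally. Another is to use the finer structure of where $\mathbf{a}^*(i)$ places mass between $C_k$ and $C_{m^*}$.

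\textbf{Part 2 (RTB).} The same skeleton applies, but the realized price is $p^*_x+c+\zeta_x$, so I must recover the ordering facts. Suppose at some realization $i$ picks $y\in C_m$ while $x\in C_l\succ_i C_m$ is unaffordable. Then $p^*_x+\zeta_x>p^*_y+\zeta_y$, so $p^*_x-p^*_y>-2z$. By Lemma \ref{lemma:non-wasteful}, $p^*_x>0$. If $p^*_y>0$, the separation hypothesis $|p^*_x-p^*_y|>2z$ forces $p^*_x>p^*_y$. If $p^*_y=0$, then $p^*_x>p^*_y$ holds trivially. The same reasoning shows that within a class the realized-cheapest choice has the minimal base price among affordable members; here the separation assumption guarantees that the realized and base orderings agree, with free goods handled separately. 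The Abel-summation step then carries over unchanged.

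I expect the main obstacle to be this bookkeeping in Part 2, specifically the zero-price goods, where the separation hypothesis is only assumed for positively priced pairs and realized prices can be reordered. The secondary delicate point is the strictness claim in Part 1 described above.
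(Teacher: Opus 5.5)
Your weak inequality in Part 1 is correct, and the Abel-summation route is a genuine alternative to the paper's. The paper takes a Strassen coupling $(\phi_1(Z),\phi_2(Z))$ of $\mathbf{a}^*(i)$ and $a_i$ with $\phi_2\succeq_i\phi_1$, and aligns $\phi_1$ with realized demand. It then compares prices realization by realization: if $i$ consumes $x$ and the coupled draw is $y\succeq_i x$, demand forces $\mathbf{p}^*_y\ge\mathbf{p}^*_x$ (strictly if $y\succ_i x$), and NTB transfers this to $p^*$. Strict dominance makes $y\succ_i x$ on a positive-measure set, so strictness is automatic there.

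In your route strictness is left open, but it closes with the refinement you hint at. Given $A_k>A^*_k$, let $m'$ be the first class after $k$ that $\mathbf{a}^*(i)$ supports; it exists since $A^*_k<1$. For $k\le j<m'$ the cumulative $A^*_j$ is constant while $A_j$ is nondecreasing, so $A_j-A^*_j\ge A_k-A^*_k>0$ throughout. Your second step gives $q_l>q_{m'}$ for all $l\le k$, hence $r_k>r_{m'}=q_{m'}$. So some $j\in[k,m')$ has $r_j>r_{j+1}$ and contributes a strictly positive Abel term. Comparing against the worst supported class $m^*$ is what lost control of $A_j-A^*_j$, since $A^*$ can jump at intermediate supported classes.

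The substantive gap is in Part 2. You read the separation hypothesis as covering only positively priced pairs, which is the condition of Theorem~\ref{theorem:ties}(i), and you propose to treat free goods ``separately''. Under that reading your within-class step is false, and so is the lemma. Suppose $y$ with $p^*_y=0$ and $x$ with $0<p^*_x<2z$ lie in the same indifference class of $i$ and both are affordable. Then, for $\zeta$ with full support on $[-z,z]$, $i$ buys $x$ on the positive-probability event $\zeta_y-\zeta_x>p^*_x$, so $\mathbf{a}^*(i)$ puts mass on $x$. Shifting that mass to $y$ gives an $a_i$ that weakly dominates $\mathbf{a}^*(i)$ yet satisfies $p^*\cdot a_i<p^*\cdot\mathbf{a}^*(i)$. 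Equilibria of this form are easy to build: put $y$ in excess supply, add another type that wants only $x$, and choose $s_x$ so that $x$ clears at a price in $(0,2z)$. So no separate bookkeeping for free goods can rescue the step.

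You need the hypothesis as stated, $|p^*_x-p^*_y|>2z$ for all distinct $x,y$ including zero-priced ones. That is exactly how the paper's proof applies it to the consumed object and its coupled partner. With it, $p^*_x<p^*_y$ implies $p^*_x+\zeta_x<p^*_y+\zeta_y$ at every realization, so realized and base orderings coincide for every pair. Your first step then holds verbatim: the base-cheapest member of a class is affordable whenever any member is, and it is the one chosen. Your second step also holds verbatim, and Lemma~\ref{lemma:non-wasteful} is not needed. The Abel argument, including the strictness fix above, then carries over unchanged.
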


\textbf{Proof of Lemma \ref{lemma:strassen}.} We can write $D_i(\mathbf{p}^*) \sim a_i^*$ and $A \sim a_i$. By Strassen's Theorem for FOSD, $a_i$ stochastically dominates $a_i^*$ if and only if there exists some random variable $Z$ such that $D_i(\mathbf{p}^*) = \phi_1(Z)$, $A = \phi_2(Z)$, and $\phi_2(z) \succeq_i \phi_1(z)$ for $z$-ae. 

We can embed these random variables in some probability space with events $\Omega$ with generic $\omega \in \Omega$ where $\mathbf{p}^*(\omega)$ and $Z(\omega)$ are realizations in the event $\omega$. Then, for each $\omega$:
\[\mathbf{p}^*(\omega) \cdot \phi_2(Z(\omega)) \geq \mathbf{p}^*(\omega) \cdot \phi_1(Z(\omega))\]
To see this, let $y$ be defined by $\phi_2(Z(\omega))_y = 1$ and $x$ by $\phi_1(Z(\omega))_x = 1$, if any such $x$ exists (we deal with the case that none exists shortly). The definition of demand implies that $\mathbf{p}^*_y > \mathbf{p}^*_x$ if $y \succ_i x$ by preference maximization, and $\mathbf{p}^*_y \geq \mathbf{p}^*_x$ if $y \sim x$ by expenditure minimization. Under case (1) of the Lemma:
\[\mathbf{p}^*_x = p^*_x + c \qquad\text{and}\qquad \mathbf{p}^*_y = p^*_y + c\]
implies that $p^*_y \geq p^*_x$ ($p^*_y > p^*_x$ if $y \succ_i x$). Under case (2) of the Lemma:
\[\mathbf{p}^*_x \geq p^*_x + c - z \qquad\text{and}\qquad \mathbf{p}^*_y \leq p^*_y + c + z\]
which implies:
\[p^*_y + z \geq p_x^* - z \implies p^*_y \geq p^*_x - 2z \implies p^*_y \in [p^*_x - 2z, \infty)\]
but then we know that:
\[|p^*_y - p_x^*| > 2z \implies p_y^* \notin [p^*_x - 2z, p^*_x + 2z]\]
intersecting, $p_y^* \in (p^*_x + 2z, \infty)$. So, $p^*_y > p^*_x$.

If no such $x$ exists, then $\phi_2(Z(\omega)) \succ_i D_i(\mathbf{p}(\omega))$ so that lemma \ref{lemma:non-wasteful} implies that $p_y > 0$. Hence $p^* \cdot \phi_2(Z(\omega)) > p^* \cdot \phi_1(Z(\omega)) = 0$, and, combining the cases:
\[p^* \cdot \phi_2(Z(\omega)) \geq p^* \cdot \phi_1(Z(\omega))\]
and if $\phi_2(Z(\omega)) \succ_i \phi_1(Z(\omega))$ then $p^* \cdot \phi_2(Z(\omega)) > p^* \cdot \phi_1(Z(\omega))$. Since this holds for every realization, it follows that:
\[p^* \cdot E[\phi_2(Z(\omega))] = p^* \cdot a_i \geq p^* \cdot a_i^* = p^* \cdot E[\phi_1(Z(\omega))]\]
and strictly so if $a_i$ stochastically dominates $a_i^*$ strictly because $\phi_2(Z(w)) \succ_i \phi_1(Z(w))$ must hold for some positive measure set of events by Strassen's Theorem. This proves the Lemma. $\blacksquare$

\begin{center}
    \textbf{Proof of Theorem \ref{theorem:fwt} and Theorem \ref{theorem:ties}, Part 1.}
\end{center}

Suppose that $p^*$ is a price equilibrium. Let $\mathbf{a}^* : I \rightarrow R^{|X|}$ be an equilibrium allocation such that $a_i^* \in L_i(p^*)$ and $\mathbf{a}^*(i) = a_i^*$ for all $i \in I$. Toward a contradiction, suppose that an allocation $\mathbf{a}$ stochastically dominates $\mathbf{a}^*$.

Under the assumptions for either Theorem, by Lemma \ref{lemma:strassen}, we have for all $i$:
\[p^* \cdot \mathbf{a}(i) \geq p^* \cdot \mathbf{a}^*(i)\]
and for some $i' \in I$:
\[p^* \cdot \mathbf{a}(i') > p^* \cdot \mathbf{a}^*(i')\]
Hence, we obtain:
\[\sum_{i \in I} p^* \cdot (\mathbf{a}(i) - \mathbf{a}^*(i)) f(i) = \sum_{i \neq i'} p^* \cdot (\mathbf{a}(i) - \mathbf{a}^*(i)) f(i) + p^* \cdot (\mathbf{a}(i') - \mathbf{a}^*(i')) f(i') > 0\]
However, this implies for some $x \in X$ with $p^*_x > 0$:
\[p^*_x \sum_{i \in I} (\mathbf{a}_x(i) - \mathbf{a}_x^*(i)) f(i) > 0 \implies \sum_{i \in I} \mathbf{a}_x(i) f(i) > \sum_{i \in I} \mathbf{a}_x^*(i) f(i) = s_x\]
where the last equality holds by market clearing. Therefore, $\mathbf{a}$ cannot be a feasible allocation, a contradiction. This proves the Theorem. $\blacksquare$

\begin{center}
    \textbf{Proof of Theorem \ref{theorem:ties}, Part 2.}
\end{center}

\textbf{Claim 1:} $D_i(\mathbf{p})$ is non-empty, single-valued, and continuous at almost every $\mathbf{p}$. Define:
\[J_i = \{\mathbf{p} : \mathbf{p} \cdot D_i(\mathbf{p}) = b_i\} \cup \{\mathbf{p} : \mathbf{p} \cdot a_i = \mathbf{p} \cdot a'_i \text{ for } a_i,a'_i \in X\}\]
Consider any $\mathbf{p} \notin J_i$. Then:
\[D_i(\mathbf{p}) = \{a_i\} \text{ for some } a_i \in X\]
and $\mathbf{p} \cdot a_i < b_i$. Let $R_i(a_i) = \{a'_i \in X : a'_i \succ_i a_i\}$ and $\bar{R}_i(a_i) = \{a'_i \in X : a'_i \succeq_i a_i\}$. For any $a'_i \in R_i(a_i)$, $\mathbf{p} \cdot a'_i > b_i$ by assumption that $D_i(\mathbf{p}) = \{a_i\}$. Similarly, $a'_i \in \bar{R}_i(a_i)$ implies $\mathbf{p} \cdot a'_i > \mathbf{p} \cdot a_i$.

There exists some $\epsilon > 0$ such that for any $\mathbf{p}'$ where $\|\mathbf{p} - \mathbf{p}'\|_\infty < \epsilon$:
\[D_i(\mathbf{p}') = \{a_i\}\]
We can consider $\epsilon < b_i - \mathbf{p} \cdot a'_i$ for all $a'_i \in R_i(a_i)$ so that $\mathbf{p}' \cdot a'_i > b_i$ for any $a'_i \in X$ such that $a'_i \succ_i a_i$. Alternatively, we may consider $\epsilon < \mathbf{p} \cdot a_i - \mathbf{p} \cdot a'_i$ for all $a'_i \in \bar{R}(a_i)$ so that $\mathbf{p}' \cdot a'_i > \mathbf{p}' \cdot a_i$ for any $a'_i \in X$ such that $a'_i \succeq_i a_i$.

This implies that $D_i(\cdot)$ is single-valued and locally constant at any $\mathbf{p} \notin J_i$; it is therefore non-empty and continuous at every $\mathbf{p} \notin J_i$. 

As in Theorem \ref{theorem:existence}, let $J_{i,x} = \{\mathbf{p} : \mathbf{p}_x = b_i\}$. Let $J_{i,(x,y)} = \{\mathbf{p} : \mathbf{p}_x = \mathbf{p}_y\}$. We have already shown that $\Pr(\mathbf{p} \in J_{i,x}) = 0$. Moreover:
\[\Pr(\mathbf{p} \in J_{i,(x,y)}) = \Pr(p_x + c + \zeta_x = p_y + c + \zeta_y) = \Pr(\zeta_x - \zeta_y = p_x - p_y) = 0\]
where the last equality follows since, by Assumptions \ref{assumption:continuous} and \ref{assumption:strong-correlation}, $\zeta_x - \zeta_y$ is a continuously distributed random variable. Then:
\[\Pr(\mathbf{p} \in J_i) \leq \sum_{x \in X} \bigg[ \Pr(\mathbf{p} \in J_{i,x}) + \sum_{y \in X} \Pr(\mathbf{p} \in J_{i,(x,y)}) \bigg] = 0\]
where the last inequality follows since this is a finite sum of measure zero events. Therefore, $J_i$ has zero measure. The Claim follows.

\textbf{Claim 2:} $L_i(p)$ is a non-empty and continuous function in $p$. By Claim 1, we can simply use the Lebesgue integral instead of Aumann integration to obtain a function that is non-empty and continuous. These properties are preserved under product and summation, implying that they hold for aggregate demand. $\blacksquare$

We establish a useful Lemma before proving Theorem \ref{theorem:sem}.
\begin{lemma}\label{lemma:continuation}
    Suppose that $p^t$ is a $(t)$-equilibrium and $\mathbf{a}^t$ is a $(t)$-equilibrium allocation for $p^t$. Then $p^t$ is a $(t+1)$-price equilibria under supply $S - \sum_{i \in I} \mathbf{a}^t(i) f^t(i)$, and $\mathbf{a}^t$ is a $(t+1)$-equilibrium allocation.
\end{lemma}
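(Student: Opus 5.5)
The plan is to verify the two conditions in the definition of a $(t+1)$-equilibrium directly. The key observation is that demand $L_i(p)$ depends only on the price, the shock distribution $G$, the budget $b_i$ and the preferences of $i$. None of these is affected by changing supply or the time window over which demand is aggregated. Since $f^n(i)>0$ only if $t_i=n$, aggregate demand splits by period:
\[\mathcal{D}^t(p) = \sum_{i\in I} L_i(p) f^t(i) + \mathcal{D}^{t+1}(p).\]
This is a Minkowski sum of correspondences. The first step is to fix a selection: write $\mathbf{a}^t(i)\in L_i(p^t)$ for the equilibrium allocation, so that the aggregate quantity realized at $p^t$ is $\sum_i \mathbf{a}^t(i) f^{t,T}(i)$. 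This vector is then a specific point of $\mathcal{D}^t(p^t)$, split into a period-$t$ part and a part
\[d^{t+1}:=\sum_i \mathbf{a}^t(i) f^{t+1,T}(i)\in \mathcal{D}^{t+1}(p^t).\]
I read the equilibrium conditions as holding for this selected point, which is how the allocation-based statements are used elsewhere.

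Let $S' = S - \sum_i \mathbf{a}^t(i) f^t(i)$. Condition 1 follows by subtracting the period-$t$ consumption from both sides of $\sum_i \mathbf{a}^t(i) f^{t,T}(i) \le S$, which gives $d^{t+1}\le S'$. Condition 2 works the same way. If $p^t_x>0$, then the $x$-coordinate of the $(t)$-aggregate equals $s_x$, and subtracting gives $d^{t+1}_x = s'_x$. We should also check that $S'\ge 0$ so the new fundamentals are well defined. This follows because the period-$t$ part is dominated by the total, which is at most $S$, and all terms are nonnegative.

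Finally, $\mathbf{a}^t$ restricted to types arriving at $t+1,\dots,T$ still satisfies $\mathbf{a}^t(i)\in L_i(p^t)$ for $i$-a.e., so it is a $(t+1)$-equilibrium allocation for $p^t$ under $S'$. The remaining types have zero weight in $f^{t+1,T}$, so they do not matter.

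The main obstacle is the set-valued bookkeeping. The conditions $\mathcal{D}^t(p^*)\le S$ and $\mathcal{D}^t_x(p^*)=s_x$ are stated for a correspondence, and subtracting a selected period-$t$ demand from a Minkowski sum is only legitimate once one commits to the selected element. I would therefore make explicit, at the start, that the equilibrium conditions are understood for the selection $\mathbf{a}^t$. Under Assumption~\ref{assumption:strong-correlation} this issue disappears, because demand is single-valued by Theorem~\ref{theorem:ties}(ii). A secondary subtlety is that budgets $b_i$ are tied to arrival times rather than to the starting period of the market. This means the $(t+1)$-problem uses the same demands, so no re-normalization of budgets is needed.
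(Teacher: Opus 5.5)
Your proposal is correct and is essentially the paper's proof. Both arguments take the $(t+1,T)$ part of the market-clearing selection as an element of $\mathcal{D}^{t+1}(p^t)$, subtract the period-$t$ consumption $\sum_i \mathbf{a}^t(i) f^t(i)$ from both sides of the feasibility inequality, and note that equality persists at positively priced coordinates. Your explicit commitment to the clearing selection, together with the checks that $S' \ge 0$ and that the a.e. condition survives restriction, simply makes precise what the paper's three-line argument uses implicitly.
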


Let the conditions for the Lemma hold. We have that:
\[\sum_{i \in I} \mathbf{a}^t(i) f^{t+1,T}(i) \in \mathcal{D}^{t+1}(p^t)\]
Moreover:
\[\sum_{i \in I} \mathbf{a}^t(i) f^{t,T}(i) = \sum_{i \in I} \mathbf{a}^t(i) f^t(i) + \sum_{i \in I} \mathbf{a}^t(i) f^{t+1,T}(i) \leq S\]
implies
\[\sum_{i \in I} \mathbf{a}^t(i) f^{t+1,T} \leq S - \sum_{i \in I} \mathbf{a}^t(i) f^t(i)\]
Clearly, this must also hold with equality for any $x \in X$ such that $p_x > 0$. This proves the Lemma. $\blacksquare$

\begin{center}
    \textbf{Proof of Theorem \ref{theorem:sem}}.
\end{center}

    
Before beginning, we define some notation. We use the $L_\infty$ norm to measure distance. That is:
\[\|a(i) - a'(i)\|_{\infty} = \sup_{x \in X} \|a(i,x) - a'(i,x)\|\]
Then:
\[\|a(\cdot) - a'(\cdot)\|_{\infty} = \sup_{i \in I} \|a(i) - a'(i)\|_{\infty}\]
The underlying space includes distributions over $I$, supply vectors, price vectors, and allocations. We represent this as:
\[\mathcal{C} = \mathcal{P} \times \mathbb{R}^{|X|} \times \mathbb{R}^{|X|} \times \mathcal{A}\]
Finally, we can use the $L_\infty$ product metric (we write this as $\|\cdot\|^*_\infty$) to define a metric space on $\mathcal{C}$. An element in this product metric space converges to another element if and only if each component converges in the respective metric.

Suppose that $(\theta_n) = (\theta_1, \theta_2, ..., \theta_n)$ is a sequence of i.i.d random vectors with expectation $E[\theta]$. We say that $\theta_n \prightarrow \theta$ (uniform convergence in probability) if:
\[\lim_{n \rightarrow \infty} \Pr\bigg( \|\theta_n - \theta\|_\infty > \epsilon\bigg) = 0\]
Now, by assumption of $\ell$-regularity (as argued in the text, this is implied by regularity under RTB), we can select an equilibrium $\mathbf{a}^* \in \mathcal{E}^1(\nu)$ such that $\mathbf{a}^* = \lambda_n(\nu) := \lambda(\nu)$ for some continuous function $\lambda$. Fix this equilibrium for the remainder of the proof. 

\textbf{Claim 1:} We will show that $\tilde{F}^1$ approximates $F^1$, and, therefore, $S^2/n$ also approximates $S - \sum_{i \in I} \mathbf{a}^*(i) f^1(i)$.

Let $\tilde{F}_n^1(i)$ denote the empirical distribution in period $1$ for an $n$-fold replica, that is:
\[\tilde{F}_n^1(i) = \frac{1}{n} \sum_{i' \in \tilde{I}^1(n)} \mathbbm{1} \{i' \leq i\}\]
Analogously, $\tilde{F}_n(\cdot;1) = (\tilde{F}_n^1(\cdot), F^2(\cdot), ..., F^T(\cdot))$. Each $i' \in \tilde{I}^1(n)$ can be viewed as a random variable which is an arrival independently sampled from the conditional distribution $F^1$. By the Glivenko–Cantelli Theorem, we have:
\[\tilde{F}_n^1 \prightarrow F^1\]
The equilibrium correspondence is homogenous degree one. This implies that an equilibrium for the scaled empirical distribution $\nu_n^1 = (\tilde{F}(\cdot, 1), nS^1/n)$ is also an equilibrium for the replica market: 
\[\alpha^1_n \in E^1(\nu_n^1) \implies \alpha^1_n \in E^1(n\tilde{F}(\cdot;1), nS)\]
$S^1 = S$ implies that $\nu^1_n \prightarrow \nu$. Hence, for any $\epsilon > 0$, there exists a large enough $n$ such that, for any $k > 0$:
\[\Pr\bigg(\|\nu^1_n - \nu\|_\infty^* > \epsilon\bigg) < k\]
By $\ell$-regularity, for any $\delta > 0$, we can thus find an $n$ such that we can select $\alpha^1_n$ satisfying:
\[\Pr\bigg(\| \alpha^1_n - \mathbf{a}^* \|_\infty > \delta \bigg) < k\]
so $\alpha_n^1 \prightarrow \mathbf{a}^*$. Thus:
\begin{align*}
    &\sum_{i \in I} \big( \alpha^1_n(i) - \mathbf{a}^*(i) \big) \tilde{f}_n^1(i) \leq \sup_{i \in \tilde{I}^1(n)} \big( \alpha_n^1(i) - \mathbf{a}^*(i) \big) \prightarrow \{0\}^{|X|}\\
    &\implies \frac{1}{n} \sum_{\tilde{i} \in \tilde{I}^1(n)} \alpha^1_n(\tilde{i}) \prightarrow \frac{1}{n} \sum_{\tilde{i} \in \tilde{I}^1(n)} \mathbf{a}^*(\tilde{i}) \prightarrow \sum_{i \in I} \mathbf{a}^*(i) f^1(i)
\end{align*}
The first convergence follows by assumption that $\|\alpha_n^1 - \mathbf{a}^*\|_\infty \prightarrow 0$. The last convergence follows because each $\mathbf{a}^*(\tilde{i})$ is an independent, identically distributed random variable. By the WLLN, the sample average converges to its expectation for any distance norm.

By \qcite{gandhi2002dependent}, there exists a dependent rounding scheme producing a feasible ex-post allocation (with probability one) such that, letting $\tilde{\alpha}^1_n$ be a draw from a distribution $G_n$ over ex-post allocations, (i) $E_{G_n}[\tilde{\alpha}^1_n(i')] = \alpha^1_n(i')$ and (ii) for any $x \in X$, $Cov(\tilde{\alpha}^1_{n,x}(i), \tilde{\alpha}^1_{n,x}(j)) \leq 0$ for every $i,j \in \tilde{I}^1(n)$. Note then that $\tilde{\alpha}^1_n$ is conditional upon observing arrivals and is random only with respect to randomization over ex-post feasible allocations. By (ii):
\begin{align*}
    Var\bigg(\frac{1}{n} \sum_{i' \in \tilde{I}^1(n)} \tilde{\alpha}^1_{n,x}(i')\bigg) &= \frac{1}{n^2} \sum_{i' \in \tilde{I}^1(n)} Var(\tilde{\alpha}^1_{n,x}(i')) + \frac{1}{n^2} \sum_{i' \neq j'} Cov(\tilde{\alpha}^1_{n,x}(i'), \tilde{\alpha}^1_{n,x}(j'))\notag\\
    &\leq \frac{1}{n^2} \sum_{i' \in \tilde{I}^1(n)} Var(\tilde{\alpha}^1_{n,x}(i'))\notag\\
    &\leq \frac{1}{4n}\label{eq:var-bound}
\end{align*}
where the final inequality follows since each $\tilde{\alpha}^1_{n,x}(i') \in \{0,1\}$. Applying Chebyshev's inequality for each $x \in X$ and a union bound:
\[\Pr\Bigg( \bigg\|\frac{1}{n} \sum_{i' \in \tilde{I}^1(n)} \tilde{\alpha}^1_n(i') - \frac{1}{n} \sum_{i' \in \tilde{I}^1(n)} \alpha^1_n(i')\bigg\|_\infty > \delta \Bigg) \leq \frac{|X|}{4n\delta^2} \ninfrightarrow 0\]
therefore:
\[\frac{1}{n} \sum_{i' \in \tilde{I}^1(n)} \tilde{\alpha}^1_n(i') \prightarrow \frac{1}{n} \sum_{i' \in \tilde{I}^1(n)} \alpha^1_n(i') \implies \frac{1}{n} \sum_{i' \in \tilde{I}^1(n)} \tilde{\alpha}^1_n(i') \prightarrow \sum_{i \in I} \mathbf{a}^*(i) f^1(i)\]
Moreover, $t = 2$ supply is:
\[S^2 = nS - \sum_{i' \in \tilde{I}^1(n)} \tilde{\alpha}^1_n(i') \implies \frac{S^2}{n} \prightarrow S - \sum_{i \in I} \mathbf{a}^*(i) f^1(i)\]
\textbf{Claim 2:} For each $t \in \{1, 2, ..., T-1\}$, $S^{t+1}/n \prightarrow S - \sum_{i \in I} \mathbf{a}^*(i) f^{1,t}(i)$. Claim 1 proved the base case for $t = 1$. We assume that the claim holds for $t$ and prove that it holds for $t + 1$.

By the same logic as Claim 1:
\[\tilde{F}^{t+1}_n \prightarrow F^{t+1}\]
Define:
\[\nu_n^{t+1} = \bigg(\tilde{F}(\cdot;t+1), \frac{S^{t+1}}{n}\bigg)\]
The induction hypothesis implies that $\nu_n^{t+1} \prightarrow (F, S - \sum_{i \in I} \mathbf{a}^*(i) f^{1,t}(i))$. By Lemma \ref{lemma:continuation}, $\mathbf{a}^* \in \mathcal{E}^{t+1}(F, S - \sum_{i \in I} \mathbf{a}^*(i) f^{1,t}(i))$. 

Similarly, $\ell$-regularity implies that, for large enough $n$, we can select $\alpha^{t+1}_n \in \mathcal{E}^{t+1}(\nu^{t+1}_n)$ such that $\alpha^{t+1}_n \prightarrow \mathbf{a}^*$ to obtain:
\[\frac{1}{n} \sum_{i' \in \tilde{I}^{t+1}(n)} \tilde{\alpha}^{t+1}_n(i') \prightarrow \sum_{i \in I} \mathbf{a}^*(i) f^{t+1}(i)\]
and conclude that:
\[\frac{S^{t+2}}{n} \prightarrow \frac{S^{t+1}}{n} - \sum_{i \in I} \mathbf{a}^*(i) f^{t+1}(i) = S - \sum_{i \in I} \mathbf{a}^*(i) f^{1,t+1}(i)\]
\textbf{Claim 3:} The SEM allocation $\alpha_n$ is asymptotically efficient. By Claims 1 and 2, we can conclude that $\alpha^t_n \prightarrow \mathbf{a}^{*}$ for each $t$. Therefore, $\alpha_n \prightarrow \mathbf{a}^*$. $\mathbf{a}^*$ is ordinally efficient by either Theorem \ref{theorem:fwt} or \ref{theorem:ties}. This proves the first part of the Theorem. 

\textbf{Claim 4:} $\alpha_n$ is equal-type envy-free for any $n$. Toward a contradiction, suppose that $\alpha^k_n(j) \succ_i \alpha^t_n(i)$ for some $i,j \in \tilde{I}$ with $k > t$. This implies there exists some $x$ such that $s_x^k > 0$ and $x \succ_i y$ for some $y$ with $\alpha^t_{n,y}(i) > 0$. However:
\[s_x^k > 0 \implies s_x^t - \sum_{i \in I} \alpha_n^t(i) f^t(i) > 0\]
Then, $x \succ_i y$ implies that $\mathbf{p}^t_x > b_i$ for some realization $\mathbf{p}^t$. By Theorem \ref{theorem:fairness}, $p_x^t - \xi > b_j$ for any $j$ with $t_j > t_i$. So:
\[s_x^t - \sum_{i \in I} \alpha_n^t(i) f^{t,T}(i) = s_x^t - \sum_{i \in I} \alpha_n^t(i) f^t(i) > 0\]
By market clearing, we must have that $p^t_x = 0$. This contradicts $\alpha^t_{n,y}(i) > 0$. Moreover, by Theorem \ref{theorem:fairness}, the allocation is envy-free within time periods. Hence, $\alpha_n$ is equal-type envy-free for each $n$. 

\textbf{Claim 5:} SEM is strategyproof. This immediately follows because by declaring truthfully, an agent arriving at $t$ has the largest budget among remaining agents. She must receive a stochastically undominated lottery given remaining supply at $t$ by Theorem \ref{theorem:existence}. 

\textbf{Claim 6:} SEM is ex-post greedy for each $n$. Let $\tilde{\alpha}^t_n$ be some matching drawn from the distribution of $\alpha^t_n$. Toward a contradiction, suppose that there exists some $x$ such that $x \succ_i y$ for some $y$ with $\alpha^t_{n,y}(i) > 0$ and $\tilde{\alpha}^k_{n,x}(j) > 0$ for some $k > t$. This implies that $s_x^k > 0$. We obtain the same contradiction as in Claim 4. Hence, $\tilde{\alpha}_n$ is ex-post greedy for each $n$. This proves the Theorem. 

\textbf{Corollary 1:} Fix an agent $i$, a misreport $j$ with $t_j=t_i$, and $\varepsilon>0$.   Write
\[
\pi_{n,i} := \pi_{n,i}^T(\tilde I^{1,T}(n))
\qquad\text{and}\qquad
\pi_{n,(i,j)} := \pi_{n,i}^T(\{j\}\cup \tilde I^{1,T}(n) \setminus\{i\})
\]
for agent $i$'s allocation under truthful reporting and under the misreport, respectively.

By Theorem \ref{theorem:sem}, there exists an equilibrium allocation $\mathbf{a}^*$ with $\pi_i := \mathbf{a}^*(i)$ and $\pi_j := \mathbf{a}^*(j)$ such that
\[
\pi_{n,i} \prightarrow \pi_i
\qquad\text{and}\qquad
\pi_{n,(i,j)} \prightarrow \pi_j .
\]
In particular, such an equilibrium must exist because a single agent's misrepresentation affects the empirical distribution by $1/n$ at most, so the empirical distribution under either truth-telling or misrepresentation converges to the same limit. The proof of Theorem \ref{theorem:sem} follows to show that the limiting equilibrium allocation is equivalent. By the intersection bound,
\[
\lim_{n \rightarrow \infty}\Pr\!\left(
\|\pi_{n,i}-\pi_i\|_\infty < \delta
\;\land\;
\|\pi_{n,(i,j)}-\pi_j\|_\infty < \delta
\right)
= 1
\quad\text{for all }\delta>0 .
\]

We distinguish two cases.

\noindent\emph{Case 1: $\pi_i=\pi_j$.}
For any $\epsilon > 0$, choose $\delta\le \varepsilon/2$. On the event above,
\[
\|\pi_{n,i}-\pi_{n,(i,j)}\|_\infty
\le
\|\pi_{n,i}-\pi_i\|_\infty
+
\|\pi_{n,(i,j)}-\pi_j\|_\infty
< 2\delta \le \varepsilon .
\]
Hence,
\[
\lim_{n \rightarrow \infty} \Pr\!\left(
\|\pi_{n,i}-\pi_{n,(i,j)}\|_\infty < \varepsilon
\right)
= 1 
\]

\medskip
\noindent\emph{Case 2: $\pi_i\neq\pi_j$.}
Lemma \ref{lemma:strassen} and $b_i = b_j$ imply that it is not true that $\pi_j \succeq_i \pi_i$. Thus there exists a cutoff $x$ such that:
\[
\sum_{y\succeq_i x} \pi_{j,y}
<
\sum_{y\succeq_i x} \pi_{i,y} .
\]
Let
\[
\gamma
:=
\sum_{y\succeq_i x} \pi_{i,y}
-
\sum_{y\succeq_i x} \pi_{j,y}
>0 .
\]
For any $\epsilon > 0$, choose $\delta < \min\{\varepsilon/2,\gamma/3\}$.
On the event that both $\pi_{n,i}$ and $\pi_{n,(i,j)}$ are within $\delta$ of their limits, we have
\[
\sum_{y\succeq_i x} \pi_{n,(i,j),y}
\le
\sum_{y\succeq_i x} \pi_{j,y} + \delta
<
\sum_{y\succeq_i x} \pi_{i,y} - \gamma + \delta
\le
\sum_{y\succeq_i x} \pi_{n,i,y} - (\gamma - 2\delta),
\]
where $-(\gamma - 2\delta)$ is strictly negative by construction.
Thus $\pi_{n,(i,j)}$ does not stochastically dominate $\pi_{n,i}$. Therefore,
\[
\lim_{n \rightarrow \infty} \Pr\!\left(
\pi_{n,i} \succeq_i \pi_{n,(i,j)}
\right)
= 1
\]
Finally:
\begin{align*}
    \lim_{n \rightarrow \infty} \Pr\!\big(
    \underbrace{\pi_{n,i} \succeq_i \pi_{n,(i,j)}}_{E_1} 
    \; \lor \; 
    \underbrace{\| \pi_{n,i} - \pi_{n,(i,j)} \|_{\infty} < \epsilon}_{E_2} 
\big) 
&\geq
\lim_{n \rightarrow \infty} \bigg( 
    \max\{
        \Pr\!\left(
            E_1
        \right)
        ,
        \Pr\!\left(
            E_2
        \right)
    \}
\bigg) \\
&= 
\max\bigg\{ 
    \lim_{n \rightarrow \infty} \Pr\!\left(
            E_1
        \right)
    ,
    \lim_{n \rightarrow \infty} \Pr\!\left(
            E_2
        \right)
\bigg\}\\
&= 1
\end{align*}
This proves the Corollary. $\blacksquare$

\begin{center}
    \textbf{Proof of Theorem \ref{theorem:fully-online}}.
\end{center}

By Theorem \ref{theorem:existence}, we know that lottery demand satisfies the necessary properties for Kakutani's Fixed Point Theorem. Hence, we can begin from this point.

\textbf{Claim 1:} A $(t)$-Lindahl equilibrium exists. Assume there exists prices such that:
\begin{align*}
    \mathcal{D}^{t,k-1}(p^t, p^{t+1}, ..., p^{k-1}) \leq \mathcal{S}^{t,k-1}\tag{Induction}
\end{align*}
By Theorem \ref{theorem:existence}, we know such prices exist for $k = t$. We skip this base case. 

There exists a price vector $\bar{p}^k$ with $p_x^k \leq K$ for each $x \in X$ such that $\mathcal{D}^{k,k}(\bar{p}^k) = \{(0)\}_{x \in X}$ by assumptions \ref{assumption:continuous} and \ref{assumption:budgets} (we can consider a large enough $K$ such that $K > 1 - \underline{\xi}$). This, in turn with the inductive hypothesis, implies:
\[\mathcal{D}^{t,k}(p^t, p^{t+1}, ..., p^{k-1}, \bar{p}^k) = \mathcal{D}^{t,k-1}(p^t, p^{t+1}, ..., p^{k-1}) \leq S^{t,k-1} \leq S^{t,k}\]

The \textit{price adjustment function} for period-$k$ prices (using the Minkowski sum) is:

\[Z^k(p) = \max\{p^k, 0\} + \bigg( \mathcal{D}^{t,k}(p) - \mathcal{S}^{t,k} \bigg)\]

and define it on the domain and range such that, for all price vectors, $0 \leq p_x^t \leq K$ for each $x \in X$. By Kakutani's Fixed Point Theorem,  there exists a fixed point $p^*$ such that $Z^k(p^*) = p^{*,k}$ for each $k \in [T]$ such that $k \leq t$. At this fixed point, we have that, for each $x \in X$:
\[p_x^{*,k} = \max\{ p_x^{*,k}, 0\} + \bigg( \mathcal{D}^{t,k}_x(p^*) - \mathcal{S}^{t,k}_x \bigg)\]
Therefore, if $p^{*,k}_x > 0$, then $\mathcal{D}^{t,k}_x(p^*) = \mathcal{S}^{t,k}_x$. Moreover, by construction, if $p^{*,k}_x = 0$, then $\mathcal{D}^{t,k}_x(p^*) \leq \mathcal{S}_x^{t,k}$. Applying induction suffices to find such equilibrium prices for every $k \in [T]$ such that $k \geq t$. Therefore, we can construct $p^*$ which will be a $(t)$-Lindahl equilibrium.

\textbf{Claim 2:} The equilibrium allocation is greedy.

We can write the equilibrium allocation as $\mathbf{a}^* : I \rightarrow \{0,1\}^{|X|}$ such that $\mathbf{a}^*(i) = a_i^*$ for some $F$-measurable selection $a_i^* \in L_i(p^*)$. Suppose that $x^* \succ_i x$ for some $x \in X$ satisfying $a_{i,x}^* > 0$. Let $k := t_i$. $a_{i,x}^* > 0$ implies that, for some realization $\mathbf{p}^*$:
\[\mathbf{p}^{*,k}_{x^*} > 1 \implies p^{*,k}_{x^*} > 1 - \bar{\xi} > 0\]
where the right-hand implication follows by Assumption \ref{assumption:bounded}. Market clearing implies that $\mathcal{D}_x^{t,k}(p^*) = \mathcal{S}^{t,k}_x$. This proves the Theorem. $\blacksquare$

%% file: Paper/Appendix-B.tex
\section{Simulation Appendix}

Here, we provide an additional simulation to support the claim that $\ell$-regularity holds generically. We simulate one-hundred instances of large markets with three objects and thirteen agent types which permute all preference rankings over the three objects (we exclude the outside option for convenience; this ensures that positive prices occur semi-frequently). We set $T = 3$ to match aggregate supply with aggregate demand. The base arrival rates are drawn from a Dirichlet distribution with identical parameters across coordinates. Budgets and prices are set to satisfy Assumptions \ref{assumption:continuous} through \ref{assumption:perfect-correlation}.

For each market instance, we simulate one-hundred perturbations of the market which adds a Dirichlet distributed shock vector scaled by $\epsilon$ to the arrival densities; we then re-normalize them to sum to one and compute the equilibrium prices. Our algorithm computes market clearing equilibria within a tolerance of $0.01$, so we set $\epsilon = 0.025$ to allow prices to change from the base market.

In our results below, we report the infinity norm in the prices between each perturbed market and the base market. We report this distance metric as normalized by the median budget to allow for meaningful comparisons. Moreover, we also report whether or not prices that are tied in the base market remain tied in the perturbed market.

\begin{table}[t]
    \centering
    \caption{Perturbation Results}
    \begin{tabular} {l@{\hskip 7ex}c}
    \hline
    \hline
    \textit{Metric} & \\ \hline
    \textit{Average Price Distance} & 0.012 \\
    \textit{Average \% Preserved Ties} & 100\% \\
    \textit{Average Clearing Error} & 0.007 \\ \hline
    \end{tabular}
    
    \label{table:perturbations}
\end{table}

Table \ref{table:perturbations} summarizes our results. On average, prices change very little relative to budgets for small perturbations, indicating that most markets are regular. Moreover, all ties in the base market remain tied in the perturbed markets, suggesting that $\ell$-regularity also holds. Finally, the average clearing error remains below our tolerance threshold, demonstrating that our algorithm is generally able to converge to an equilibrium.